\crefname{theorem}{Theorem}{Theorems}
\crefname{observation}{Observation}{Observations}
\crefname{lemma}{Lemma}{Lemmas}
\crefname{corollary}{Corollary}{Corollaries}
\crefname{proposition}{Proposition}{Propositions}
\crefname{example}{Example}{Examples}
\crefname{claim}{Claim}{Claims}
\crefname{table}{Table}{Tables}
\crefname{equation}{Inequality}{Inequalities}
\crefname{reductionrule}{Reduction rule}{Reduction rules}
\crefname{section}{Section}{Sections}
\crefname{figure}{Figure}{Figures}
\crefname{chapter}{Chapter}{Chapters}
\numberwithin{equation}{section}
\numberwithin{figure}{section}
\theoremstyle{plain}
\newtheorem{theorem}{Theorem}[section]
\newtheorem{lemma}[theorem]{Lemma}
\newtheorem{proposition}[theorem]{Proposition}
\newtheorem{claim}[theorem]{Claim}
\theoremstyle{remark}
\newtheorem{remark}[theorem]{Remark}
\theoremstyle{definition}
\newtheorem{definition}[theorem]{Definition}
\newcommand{\match}{\mathsf{Match}}
\newcommand{\UM}{\mathsf{UM^*}}
\newcommand{\UMs}{\mathsf{Uniform}^*}
\newcommand{\UMb}{\mathsf{Uniform_{bid}}}
\newcommand{\UMa}{\mathsf{Uniform_{ask}}}
\newcommand{\MM}{\mathsf{MM}}
\newcommand{\ts}{\mathsf{timestamp}}
 \newcommand{\pr}{\mathsf{price}}
 \newcommand{\Q}{\mathsf{Qty}}
 \newcommand{\q}{\mathsf{qty}}
 \newcommand{\id}{\mathsf{id}}
 \newcommand{\vol}{\mathsf{Vol}}
\newcommand{\select}{\mathsf{Select}}
\newcommand{\OPT}{\mathsf{OPT}}
\newcommand{\selectq}{\mathsf{SelectQ}}
\newcommand{\range}{\mathsf{Range}}
\newcommand{\spl}{\mathsf{Split}}
\newcommand{\splq}{\mathsf{SplitQ}}
\begin{document}
\title{The Exchange Problem}

\author{Mohit Garg\thanks{Indian Institute of Science, Bengaluru, India. mohitgarg@iisc.ac.in. Supported by a Walmart fellowship.} \ and
Suneel Sarswat\thanks{Tata Institute of Fundamental Research, Mumbai, India. suneel.sarswat@gmail.com.}}

\date{}

\maketitle\begin{abstract}
Auctions are widely used in exchanges to match buy and sell requests. Once the buyers and sellers place their requests, the exchange determines how these requests are to be matched. The two most popular objectives used while determining the matching are maximizing volume at a uniform price and maximizing volume with dynamic pricing. In this work, we study the algorithmic complexity of the problems arising from these matching tasks.

We present a linear time algorithm for uniform price matching which is an improvement over the previous algorithms that take $O(n\log n)$ time to match $n$ requests. For dynamic price matching, we establish a lower bound of $\Omega(n \log n)$ on the running time, thereby proving that the currently known best algorithm is time-optimal.
\end{abstract}
  
\paragraph{Key words:} Call Auctions,
Matching Algorithms, Element Distinctness,
Time Complexity.

\section{Introduction}
We study problems of the following nature: The input is a list of trade requests from buyers and sellers of a particular product. Each request consists of a price and a quantity. 
The buyer's request, known as a {\it bid}, is represented by a pair $(p,q)$, which indicates that the buyer offers to buy a maximum of $q$ units of the product while paying at most $p$ per unit. Similarly, a seller's request, known as an {\it ask}, is also a pair $(p,q)$, which indicates that the seller offers to sell a maximum of $q$ units of the product provided they receive at least $p$ per unit. 
On receiving the input, we are required to generate a {\it matching}, which is a collection of {\it transactions}.\footnote{We use the term {\it matching}, which has been traditionally used for this object in the auction theory literature~\cite{NiuP13,zhao2010maximal,WWW98,RSS21}, while a term like {\it flow} might be more suitable for a computer science audience.} A transaction between a bid $(p_b, q_b)$ and an ask $(p_a, q_a)$ consists of a transaction price $p\in [p_a, p_b]$ and a transaction quantity $q\leq\min\{q_b, q_a\}$, indicating that $q$ units of the product change hands between the traders at price $p$ per unit. For each trade request $(p,q)$, the sum of the total transaction quantities of the transactions involving the request must be at most $q$. The volume of the matching is the total quantity that changes hands. The matching is {\it uniform} if all transactions occur at a common price. We consider the following two tasks.

\begin{description}
\item[Task 1:] Determine a matching with the largest volume.

\item[Task 2:] Determine a uniform matching with the largest volume.
\end{description}

The first task can be solved by formulating it as a max-flow problem. However, due to the underlying problem structure, simpler solutions are available for these tasks.
In fact, two simple and almost identical algorithms can be used for these tasks. We describe them below. 

\label{twoAlgs}
\begin{description}
\item[Algorithms:] These algorithms start with sorting the list of bids in decreasing order of their prices. Next, the list of asks is sorted. For the first task, the sell requests are sorted in decreasing order of their prices. Whereas, for the second task, the sell requests are sorted in increasing order of their prices.

After the sorting step, both algorithms work in linear time as follows.  The bid on top of its sorted list $(p_b, q_b)$ is matched with the ask on top of its sorted list $(p_a, q_a)$ if they are compatible, i.e., $p_b \ge p_a$. In this case, a transaction between them is established with quantity $q= \min(q_b,q_a)$; a quantity of $q$ is reduced from their existing quantities of $q_b$ and $q_a$; finally, the $0$ quantity requests are deleted from the lists. If the requests are incompatible, the topmost ask in the sorted list of asks is deleted. 
The above steps are then repeatedly applied until one of the lists becomes empty. Finally, for the first task, for each matched pair: $\langle (p_b,q_b), (p_a, q_a) \rangle$ any transaction price in the interval $[p_a,p_b]$ can be put. For the second task, let the last matched pair in the above process be $\langle (p_b,q_b), (p_a, q_a) \rangle$. Then, for all matched pairs any number in the interval $[p_a,p_b]$ can be chosen as the common transaction price.
\end{description}

Clearly, the above algorithms take time $O(n\log n)$ in the comparison model. We ask the following question. 
\begin{description}
\item\textbf{Do these algorithms have the optimal running time for the above tasks?}
\end{description}

The problems we consider in this work are slight generalizations of the above tasks where bids and asks have additional parameters, and the objectives incorporate a notion of {\it fairness}. These problems 
arise out of {\it call auctions}, which are a class of {\it double auctions}, that are extensively used to match the buy and sell requests of traders at various centralized marketplaces such as exchanges of 
stocks
and commodities. Numerous works have focused on the mechanism-design and game-theoretic aspects of such auctions (e.g., see~\cite{ESA2002} and the citations therein), with only a few of them that include a runtime analysis of their algorithms~(\cite{WWW98,NiuP13,zhao2010maximal}). The problems considered in this work arise all the time in the real world and could easily be introduced in an undergraduate course in algorithms. For the above tasks, we prove the following.

\begin{theorem}\label{thm:result1Baby}
    Any algorithm for Task 1 requires $\Omega(n \log n)$ time in the worst-case, where $n$ is the number of trade requests.
\end{theorem}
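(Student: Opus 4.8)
The plan is to prove the lower bound by reduction from \textsc{Element Distinctness}: given $n$ real numbers $x_1,\dots,x_n$, decide whether they are pairwise distinct. By Ben-Or's theorem this problem requires $\Omega(n\log n)$ operations in the algebraic decision tree (equivalently, the comparison-based) model, and the reduction I have in mind will use only $O(n)$ additional comparisons and arithmetic operations. Hence any $o(n\log n)$ algorithm for Task~1 would yield an $o(n\log n)$ algorithm for \textsc{Element Distinctness}, a contradiction, which is exactly the statement to be proved.

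First I would, from the instance $x_1,\dots,x_n$, build in linear time an exchange instance whose bids and asks are placed at prices derived from the $x_i$ (together with a constant number of auxiliary ``boundary'' requests used to pin down the structure of the optimum). The design goal is that the answer to \textsc{Element Distinctness} can be read off, in $O(n)$ time, from \emph{any} maximum-volume matching returned by the assumed Task~1 algorithm. The guiding structural fact I would establish first is a normal-form lemma for optima on such instances: since every transaction between a compatible pair $\langle(p_b,q_b),(p_a,q_a)\rangle$ satisfies $p_a\le p_b$, summing the nonnegative quantities $p_b-p_a$ over a maximum-volume matching tightly constrains which requests can be matched together and at what transaction prices. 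I would then use this to argue that the combinatorial ``shape'' of the optimum---which requests are matched, to whom, and at which transaction price---changes in a locally detectable way precisely when two input values coincide.

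The step I expect to be the main obstacle is the gadget itself: arranging that a repeated value leaves a robust, locally-checkable fingerprint in the output that persists for \emph{every} optimal matching, not merely for one convenient optimum. This is delicate precisely because the maximum \emph{volume} alone is insensitive to coincidences among prices: it is a function only of the cumulative demand and supply curves, which are unchanged when one splits or merges equal-priced quantity. Consequently the reduction must extract its single bit from the finer structure of the matching (the pairing and the chosen transaction prices) rather than from the optimal value, and the construction must be engineered so that this bit cannot be ``hidden'' by the adversary's choice of optimum. Once such a fingerprint is isolated, the remaining work is routine bookkeeping: checking that both the construction and the extraction run in $O(n)$ time and use only operations admissible in the algebraic decision tree model, so that the $\Omega(n\log n)$ lower bound for \textsc{Element Distinctness} transfers intact to Task~1.
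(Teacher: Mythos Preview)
Your high-level idea---reduce from \textsc{Element Distinctness}---is the right one, but you have talked yourself out of the simple route. You assert that ``the maximum \emph{volume} alone is insensitive to coincidences among prices'' and therefore plan to mine the pairing/price structure of an optimal matching. That assertion is true only for the naive constructions you seem to have in mind; with the right instance the optimal volume \emph{itself} detects distinctness, and no inspection of who-is-matched-to-whom is needed.

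The paper reduces from the \emph{small-domain} version of \textsc{Element Distinctness} (inputs $x_1,\dots,x_n\in[n]$), which already carries an $\Omega(n\log n)$ comparison lower bound. From $(x_1,\dots,x_n)$ it builds two sets of unit-quantity orders: $\Omega$ with prices $1,2,\dots,n$ and $\Lambda$ with prices $x_1,\dots,x_n$. It then calls the Task~1 algorithm \emph{twice}, once with $(\Omega,\Lambda)$ as (bids,\,asks) and once with the roles swapped, obtaining matchings $M_1,M_2$. If the $x_i$ are distinct (hence a permutation of $[n]$) then trivially $\vol(M_1)=\vol(M_2)=n$. If not, some $m\in[n]$ is missing, and a strengthened demand--supply inequality
\[
\vol(M)\le \vol(B_{>p})+\vol(A_{<p})+\min\bigl(\vol(B_{=p}),\vol(A_{=p})\bigr)
\]
applied at $p=m$ forces $\vol(M_1)\le n-1$ or $\vol(M_2)\le n-1$ (according to whether the smallest missing value lies below or above the smallest repeated value). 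The answer is thus read off from the two volumes in $O(n)$ extra time.

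By contrast, your plan never specifies the gadget, concedes that designing it is ``the main obstacle,'' and aims to extract the bit from partner choices and transaction prices that the Task~1 specification leaves entirely free---so an adversary returning \emph{some} optimum can plausibly erase whatever local fingerprint you intend. Restricting to the small domain and introducing the reference prices $1,\dots,n$ is precisely the missing idea: it makes the cumulative curves themselves change under a coincidence, so the volume carries the answer and the robustness-against-all-optima issue vanishes.
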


\begin{theorem}\label{thm:result2Baby}
    There exists an algorithm for Task 2 that runs in $O(n)$ time, where $n$ is the number of trade requests.
\end{theorem}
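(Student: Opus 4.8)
The plan is to replace the sorting step of the ``linear-time-after-sorting'' algorithm with a prune-and-search that locates an optimal uniform price directly in $O(n)$ time. The starting point is a structural characterization. For a candidate common price $p$, a bid $(p_b,q_b)$ can participate in a uniform matching at price $p$ only if $p_b \ge p$, and an ask $(p_a,q_a)$ only if $p_a \le p$; conversely, any eligible bid may be paired with any eligible ask. Hence the largest volume of a uniform matching at price $p$ is exactly $\min(B(p),A(p))$, where $B(p)=\sum_{p_b\ge p} q_b$ and $A(p)=\sum_{p_a\le p} q_a$, and the answer to Task 2 is $V^\ast=\max_p \min(B(p),A(p))$, the maximum ranging over the finitely many bid/ask prices. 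First I would record that $B$ is a non-increasing step function while $A$ is non-decreasing, so $g=B-A$ is non-increasing; consequently $\min(B,A)$ equals $A$ while $g\ge 0$ and equals $B$ once $g<0$, making it unimodal (non-decreasing, then non-increasing) with its maximum at the crossover of $B$ and $A$.

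Second, I would find this crossover by prune-and-search on the multiset of request prices. Maintain an active set $R$ of requests together with two offsets $\beta,\alpha$ recording the quantity of bids/asks already committed as counted, so that $B(p)=\beta+\sum_{\text{active }p_b\ge p}q_b$ and $A(p)=\alpha+\sum_{\text{active }p_a\le p}q_a$ throughout. In each round, compute the median price $m$ of the active requests by linear-time selection, evaluate $B(m)$ and $A(m)$ by one scan of $R$, and insert $\min(B(m),A(m))$ into a running maximum. If $B(m)\ge A(m)$, then since $g$ is non-increasing every price $\le m$ yields value $\le A(m)$, so I discard all active requests with price $\le m$ (bids vanish, asks are folded into $\alpha$) and recurse on the rest; the case $B(m)<A(m)$ is symmetric, discarding all prices $\ge m$ (asks vanish, bids folded into $\beta$). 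Because $m$ is a median, each side contains at least half of $R$, so each round discards at least half the active requests, and the per-round work is linear in $|R|$, giving $T(n)\le T(n/2)+O(n)=O(n)$. The running maximum, together with a final evaluation of $\min(\alpha,\beta)$ for the residual empty range, equals $V^\ast$, and I keep the price $p^\ast$ attaining it.

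Finally, I would recover an actual matching of volume $V^\ast$ in one more linear pass over the original requests: collect the bids with $p_b\ge p^\ast$ and the asks with $p_a\le p^\ast$ and pair them off greedily, issuing a transaction of quantity equal to the smaller of the two current residual quantities, advancing past whichever request is exhausted, all at the common price $p^\ast$, until the total matched volume reaches $V^\ast=\min(B(p^\ast),A(p^\ast))$. Since the matching is uniform, any eligible bid may be matched with any eligible ask, so no sorting is needed here and feasibility is immediate.

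The main obstacle is the correctness of the pruning rule rather than its running time: I must argue that discarding a whole side of the median never discards the optimum, which rests on the monotonicity of $g$ and the unimodality of $\min(B,A)$, and I must handle ties at the median carefully so that the discard is inclusive of $m$ yet still provably removes at least half of $R$ each round; this is precisely what keeps the recurrence geometric and avoids an $O(n\log n)$ blow-up from repeatedly rescanning. A secondary point to verify is that evaluating $B$ and $A$ through the committed offsets $\alpha,\beta$ stays consistent with the global functions after each prune, which is what lets a round cost only $O(|R|)$ rather than $O(n)$.
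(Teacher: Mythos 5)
Your proposal is correct, but it follows a genuinely different route from the paper. The paper never bisects the price axis: it proves the stronger statement (\cref{thm:result3}) that even a \emph{fair} uniform matching of maximum volume can be computed in $O(n)$ time, and derives \cref{thm:result2Baby} as a corollary. Its algorithm is a prune-and-search over the \emph{orders} ranked by competitiveness: select the median bid by rank, split the bids around it, use a weighted (quantity-based) selection to locate the ask sitting at the same cumulative quantity, and then either exhaustively match the two competitive halves (if that bid--ask pair is tradable) or discard the two uncompetitive halves (if not); correctness is established by showing that the output agrees, order by order, with the sorted greedy algorithm $\UM$ of~\cite{RSS21}. You instead prune in price space, using the characterization $V^\ast=\max_p\min\bigl(B(p),A(p)\bigr)$ together with the monotonicity of $B-A$ to discard half of the candidate prices per round, and you construct a matching only after the equilibrium price has been found. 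Both approaches yield the recurrence $T(n)\le T(n/2)+O(n)$, and your pruning rule, tie handling at the median, and offset bookkeeping all check out (every order price is either evaluated exactly as a median or dominated by the recorded value at the median of the round in which it was discarded, and the optimum over all real prices is attained at an order price, so the running maximum equals $V^\ast$). What your route buys is a shorter, self-contained argument that makes the equilibrium price explicit and avoids any correctness comparison with the sorted greedy algorithm; what it gives up is fairness: pairing the eligible orders at $p^\ast$ greedily in arbitrary order may fully trade a less competitive bid while starving a more competitive one with the same limit price, so your algorithm solves Task 2 but not the paper's Problem 2, and additional work would be needed to recover \cref{thm:result3}. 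For the statement as posed, which does not require fairness, this is not a gap.
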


Thus, interestingly, for Task 1, the algorithm described above is time-optimal, whereas for Task 2, the algorithm described above is not time-optimal.

To describe the general problems that we consider, previous results, and our contributions formally, we need the following preliminaries.
\subsection{Preliminaries}
 We now describe call auctions which are slight generalizations of the tasks introduced above. A call auction is used at an exchange to match the bids and asks of traders for a particular product. Bids and asks are collectively termed orders. 
Orders are collected by the exchange for a fixed duration of time, at the end of which the collected orders are simultaneously matched to produce transactions. Each transaction between a bid and an ask consists of a transaction price and a transaction quantity.
Each order consists of four attributes that are natural numbers: a unique identification number (id), a unique timestamp, a limit price, and a quantity that is at least one.\footnote{Later, the orders given as input to the algorithms, are all assumed to have distinct ids and distinct timestamps.} $\ts(w)$ represents the time the order $w$ was received. For an order $w$, the limit price $\pr(w)$ is the minimum (if $w$ is an ask) or maximum (if $w$ is a bid) possible transaction price of a transaction involving $w$. For an order $w$, the quantity $\q(w)$ represents the total quantity offered for trade, i.e., the sum of transaction quantities of transactions involving $w$ in the matching must be at most $\q(w)$. Given a collection of orders with distinct ids, i.e., given a set of bids and asks with distinct ids, a matching is a collection of transactions that satisfy the natural constraints. We formally define a matching below, along with some other important terms.

\begin{definition}[tradable, traded quantity $\Q$, volume $\vol$, matching] 

Given a bid $b$ and an ask $a$, we say that $b$ and $a$ are {\bf tradable} if $\pr(b)\geq \pr(a)$. 

For a set of transactions $M$ and an order $w$, $\Q(w,M)$ denotes the sum of transaction quantities of transactions in $M$ that involve $w$, and $\vol(M)$ denotes the sum of all transaction quantities of the transactions in $M$. With slight abuse of notation for a set of asks or a set of bids $\Omega$, we define $\vol(\Omega)$ to be the sum of the quantities of the orders in $\Omega$.

A set of transactions $M$ is a {\bf matching} over a list of bids $B$ and asks $A$ with distinct ids if (i) For each transaction $m\in M$, the bid and ask of $m$ come from $B$ and $A$, respectively. (ii) For each transaction $m \in M$, the bid $b$ of $m$ is tradable with the ask $a$ of $m$. (iii)
For each order $w \in B\cup A$,  $\Q(w, M) \leq \q(w)$. 
\end{definition}
There are two main types of call auctions: uniform-price and dynamic-price.
In both these auctions, the matching produced needs to be fair, which essentially states that more competitive orders (based on price-time priority) have to be given preference in the matching. We now formally define a fair matching along with the notion of competitiveness.
\begin{definition}[Competitiveness $\succ$, Fair] For bids $b_1$ and $b_2$ we say $b_1$ is more {\bf competitive} than $b_2$, denoted by $b_1 \succ$ $b_2$, iff $\pr(b_1) > \pr(b_2)$ or ($\pr(b_1)=\pr(b_2)$ and $\ts(b_1) < \ts(b_2)$). Similarly, for asks $a_1$ and $a_2$, we say $a_1$ is more competitive than $a_2$, denoted by $a_1 \succ a_2$, iff $\pr(a_1) < \pr(a_2)$ or ($\pr(a_1)=\pr(a_2)$ and $\ts(a_1) < \ts(a_2))$.

We say a matching $M$ is fair on bids if a bid $b$ participates in $M$, then all bids that are more competitive than $b$ must be fully traded in $M$. Formally, $M$ is fair on bids iff for all pairs of bids $b_1$ and $b_2$ such that $b_1 \succ b_2$, $\Q(b_2,M)\geq 1\implies \Q(b_1,M)=\q(b_1)$.
Similarly, a matching $M$ is fair on asks iff  for all pairs of asks $a_1$ and $a_2$ such that $a_1 \succ a_2$, $\Q(a_2,M)\geq 1\implies \Q(a_1,M)=\q(a_1)$. A matching is {\bf fair} if it is fair on bids as well as fair on asks. 
\end{definition}

In uniform price call auctions, the exchange is supposed to output a fair matching with maximum volume subject to the constraint that all transaction prices in the matching are identical. Customarily, the transaction price of a uniform price matching is referred to as the equilibrium price, and the process of discovering this price is called price discovery. In contrast, for dynamic price call auctions, the exchange is supposed to output a fair matching with maximum volume, where the transaction prices need not be the same. It must be noted that dropping the requirement of fairness from a dynamic price matching can only make the problem easier; the following result states that for any matching, there exists a fair matching of the same volume.

\begin{theorem}[Theorem 15 and Page 13 of \cite{RSS21}]\label{thm:fairness}
Given a set of bids $B$, a set of asks $A$, and a matching $M$ over $(B,A)$, one can find a fair matching $M'$ over $(B,A)$ such that 
\begin{itemize}
    \item[(a)] $\vol(M)=\vol(M')$, and
    \item[(b)] if $M$ is uniform, then $M'$ is uniform.
\end{itemize}
\end{theorem}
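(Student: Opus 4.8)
The plan is to prove the statement by an exchange argument: starting from $M$, I would repeatedly perform local quantity transfers that each preserve the total volume (and the common price, if $M$ is uniform) while driving the matching toward fairness, and then show via a monovariant that this process terminates at a fair matching $M'$. Since timestamps are distinct, $\succ$ is a strict total order on the bids and on the asks separately, so the competitiveness rank of each order is well defined and the notion of ``the most competitive untraded order'' makes sense.

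First I would isolate the single repair operation on the bid side. Suppose $M$ is not fair on bids, witnessed by bids $b_1 \succ b_2$ with $\Q(b_2, M) \ge 1$ but $\Q(b_1, M) < \q(b_1)$. Pick any transaction of $M$ involving $b_2$, say with ask $a$ at price $p$, and transfer a positive quantity $\delta$ (at most the quantity of that transaction and at most the residual capacity $\q(b_1) - \Q(b_1, M)$) from it to a new transaction between $b_1$ and $a$ at the same price $p$. The observation making this legal is that $b_1 \succ b_2$ forces $\pr(b_1) \ge \pr(b_2) \ge \pr(a)$, the last inequality holding because $b_2$ and $a$ were tradable; hence $b_1$ is tradable with $a$ and $p \in [\pr(a), \pr(b_1)]$. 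The result is again a matching, with the same volume and, crucially, with every ask's total traded quantity unchanged. The symmetric operation repairs an ask-side violation $a_1 \succ a_2$ by moving quantity from a transaction $(b, a_2)$ to $(b, a_1)$ at the same price, which is legal because $\pr(b) \ge \pr(a_2) \ge \pr(a_1)$, and which leaves every bid's total traded quantity unchanged.

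Next I would set up the monovariant. Let $r(\cdot)$ be the competitiveness rank of an order within its side (rank $1$ being the most competitive), and define $\Phi(M) = \sum_{b} r(b)\,\Q(b,M) + \sum_{a} r(a)\,\Q(a,M)$. A bid repair moves quantity from the higher-ranked $b_2$ to the lower-ranked $b_1$ and leaves all ask terms fixed, so it strictly decreases the first sum; an ask repair strictly decreases the second sum and leaves the first fixed. Thus every repair strictly decreases $\Phi$. Since the $\Q$-values are integers and $\Phi \ge 0$, only finitely many repairs occur. Equivalently, one may first eliminate all bid-side violations (each repair alters only bid totals) and then all ask-side violations (each repair alters only ask totals), so the two phases do not interfere and the resulting $M'$ is fair on both sides. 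As total volume is preserved at every step, $\vol(M') = \vol(M)$, giving (a).

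Finally, for (b) I would note that every repair creates new transactions only at a price $p$ that already appeared in $M$ (the price of the transaction being split), and that $p$ was verified to lie in the required interval for the new pair. Hence if $M$ is uniform with common price $p^\star$, all transactions throughout the process, and in particular in $M'$, remain at $p^\star$, so $M'$ is uniform. The only genuinely delicate point is termination: I would take the transfers to be maximal (saturating either the transaction being drained or the residual capacity of the more competitive order), so that each violating pair is resolved in finitely many steps, and rely on integrality, equivalently on the strictly decreasing and lower-bounded $\Phi$, to conclude that the whole process halts.
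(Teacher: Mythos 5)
Your argument is correct, but note that the paper itself does not prove this statement: it is imported verbatim from \cite{RSS21} (``Theorem 15 and Page 13''), so there is no in-paper proof to compare yours against. What you have written is a legitimate self-contained replacement for that citation, and it is in the same spirit as the surgery argument the paper sketches informally when justifying the correctness of $\UM$ in Section 4 (there, $\OPT$ is gradually transformed into the greedy matching by rerouting unit quantities). The two load-bearing points in your write-up are both sound: the inequality chain $\pr(b_1)\ge\pr(b_2)\ge\pr(a)$ (and its ask-side mirror $\pr(b)\ge\pr(a_2)\ge\pr(a_1)$) is exactly what licenses the local transfer at the \emph{existing} transaction price, which is also what makes part (b) free; and the observation that a bid repair leaves every ask's traded total untouched (and vice versa) is what lets the two phases be run independently without undoing each other, since fairness on each side depends only on that side's $\Q$-values. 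Your potential $\Phi$ handles the one subtlety that a repair can create fresh violations further down the competitiveness order; since quantities in this paper are natural numbers, $\Phi$ drops by at least $1$ per repair and termination is immediate, so your hedging about non-integral quantities is not needed in this setting. The only cosmetic caution is your phrase ``higher-ranked $b_2$'': with rank $1$ most competitive, $b_2$ has the larger rank \emph{number}, which is what makes $\Phi$ decrease; state this explicitly to avoid a sign confusion.
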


\begin{remark} It is clear from the above theorem, that
    unlike uniformity, ensuring fairness does not change the volume of a matching. More precisely, given sets of bids $B$ and asks $A$, consider the following matchings over $(B,A)$. Let $M_1$ be a largest volume matching, $M_2$ be a fair matching with the largest volume, $M_3$ be a uniform matching with the largest volume, and finally let $M_4$ be a uniform and fair matching with the largest volume. Then,  $$\vol(M_1)=\vol(M_2)\geq \vol(M_3)=\vol(M_4).$$
    Nevertheless, computing $M_1$ can be easier than computing $M_2$, and similarly computing $M_3$ can be easier than computing $M_4$. More precisely, computing $M_1$ reduces to computing $M_2$, and computing $M_3$ reduces to computing $M_4$.
\end{remark}

The two matching problems that we study in this work can be stated as follows.

\begin{description}

 \item[Problem 1] Given a set of bids $B$ and a set of asks $A$, find a fair matching over $(B,A)$ of maximum volume.
 \item[Problem 2] Given a set of bids $B$ and a set of asks $A$, find a fair and uniform matching over $(B,A)$ of maximum volume.
\end{description}

To show our lower bound results, we have to explain the model of computation we assume. 
\begin{description}
    \item[The comparison model:] We assume that the prices of the orders in the input are not given explicitly. 
Rather, two prices can be compared to each other via an oracle in unit time.

\item[The binary query model:] We will also derive a lower bound in a more general model, namely the binary query model, where the oracle can be made to evaluate an arbitrary function $f: \mathbb N \times \mathbb N \rightarrow \{0,1\}$ on two prices in unit time.
\end{description}
\begin{remark} Our upper bound results do adhere to the more restricted comparison model.
\end{remark}

We are now ready to state the previous and new results.

\subsection{Results, techniques, and discussion}

We first briefly summarize the past results that were obtained for Problems 1 and 2. In the following, $n$ represents the number of orders in the input, i.e., $n=|A|+|B|$.
\begin{itemize}
    \item First, in \cite{WWW98} it was shown that Problem 2 can be solved in $O(n \log n)$ time.
    \item Next, in \cite{zhao2010maximal} it was shown that Problem 1 can be solved in $O(n^2)$ time under the assumption that all orders have unit quantity. The authors mention that for multi-unit orders, one can simply break an order into multiple single-unit orders and use their algorithm. But, this would then result in overall complexity $O(Q^2)$, where $Q$ is the sum of the total demand and total supply, i.e., $Q=\vol(A)+\vol(B)$. 
    \item This was improved in \cite{NiuP13} where it was shown that Problem 1 can be solved in $O(n \log n)$ time for single-unit orders. 
    \item Finally, inspired from previous works, in~\cite{RSS21} algorithms for both Problems 1 and 2 were presented that run in $O(n \log n)$ time each for arbitrary multi-unit orders. However, the authors did not analyze the time complexity of their algorithms, as they were more focused on formalizing these algorithms in a theorem prover. Their algorithms are general versions of the two algorithms we saw on page~\pageref{twoAlgs}, where instead of sorting the lists of bids and asks by their prices, they are sorted by their competitiveness. After this modification, the first algorithm yields a maximum volume matching (note that Problem 1 requires the output matching to be also fair), whereas the second algorithm yields a uniform and fair matching with maximum volume (precisely solving Problem 2). For solving Problem 1, in addition to the above algorithm, they use an $O(n\log n)$ time algorithm that takes an arbitrary matching and outputs a fair matching of the same volume (this is always possible, see \cref{thm:fairness}). It is straightforward to see that their algorithms indeed run in $O(n\log n)$ time.
\end{itemize}

Now we describe the results obtained in this work.

For Problem 1, we show that an easier problem, namely finding a maximum volume matching (which need not be fair), requires $\Omega(n \log n)$ time for processing $n$ requests in the comparison model and $\Omega(n \sqrt{\log n})$ time in the binary query model.

\begin{restatable}[Result 1]{theorem}{resultone}
\label{thm:result1}
Any algorithm that takes as input a set of bids $B$ and a set of asks $A$ and computes a matching over $(B,A)$ with the maximum volume has a worst-case running time of $\Omega (n \log n)$ in the comparison model and $\Omega(n \sqrt{\log n})$ in the binary query model, where $n=|B|+|A|$.
\end{restatable}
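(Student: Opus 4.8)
The plan is to reduce \emph{Element Distinctness} to the problem of computing a maximum volume matching, so that the known lower bounds for Element Distinctness transfer directly. Recall that deciding whether $n$ natural numbers $x_1,\dots,x_n$ are pairwise distinct requires $\Omega(n\log n)$ operations in the comparison model and $\Omega(n\sqrt{\log n})$ queries in the binary query model. Thus it suffices to exhibit a linear-time, model-faithful reduction that turns any maximum-volume-matching algorithm into an Element Distinctness algorithm with only $O(n)$ overhead.

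Given an instance $x_1,\dots,x_n$, I would build a matching instance with $n$ bids and $n$ asks, all of unit quantity and with arbitrary distinct ids and timestamps: a bid $b_i$ with $\pr(b_i)=2x_i$ and an ask $a_i$ with $\pr(a_i)=2x_i+1$. The doubling-and-shift makes tradability strict: $b_i$ is tradable with $a_j$ iff $2x_i\ge 2x_j+1$, i.e.\ iff $x_i>x_j$. Since every order has unit quantity, the volume of any matching equals its number of transactions, and a maximum volume matching is exactly a maximum-cardinality matching in this bipartite ``threshold'' graph. The key structural claim is
\[
\vol(M^\ast)=n-\mu,\qquad \mu:=\max_{v}\#\{i:x_i=v\},
\]
where $M^\ast$ is a maximum volume matching and $\mu$ is the largest multiplicity among the $x_i$. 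Granting this, the $x_i$ are pairwise distinct iff $\mu=1$ iff $\vol(M^\ast)=n-1$, while any collision forces $\mu\ge 2$ and hence $\vol(M^\ast)\le n-2$. So after running the matching algorithm I would simply sum the transaction quantities of its output in $O(n)$ time and report ``all distinct'' iff this sum equals $n-1$.

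To prove the structural claim I would use König's theorem in its deficiency form: the maximum matching has size $n-\max_{S}(|S|-|N(S)|)$ over subsets $S$ of the bids. For our graph, $N(S)=\{a_j:x_j<\max_{b_i\in S}x_i\}$ depends only on the largest value occurring in $S$; fixing that value to be $v$, the deficiency is maximized by taking $S$ to be all bids of value $\le v$, giving $|S|-|N(S)|=\#\{i:x_i\le v\}-\#\{i:x_i<v\}=\#\{i:x_i=v\}$. Maximizing over $v$ yields deficiency exactly $\mu$ (and the empty set shows it is nonnegative), hence $\vol(M^\ast)=n-\mu$.

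Finally I would check that the reduction is faithful to both computational models, which is immediate because the constructed prices are $2x_i$ or $2x_i+1$ with the parity (bid-type versus ask-type) fixed and known to the reduction. A comparison of two constructed prices is determined by a single comparison of the underlying $x$'s, so a $T$-comparison matching algorithm yields a $T$-comparison Element Distinctness algorithm, forcing $T=\Omega(n\log n)$. Likewise, any binary query $f(2x_i+\epsilon_i,\,2x_j+\epsilon_j)$ on two constructed prices is itself a binary query $g(x_i,x_j)$ on the inputs, with $g$ depending only on the known bits $\epsilon_i,\epsilon_j$; hence the binary-query bound transfers to give $T=\Omega(n\sqrt{\log n})$. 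I expect the main work to be twofold: pinning down the identity $\vol(M^\ast)=n-\mu$ cleanly (the deficiency computation above is the heart of it), and correctly invoking the $\Omega(n\sqrt{\log n})$ Element Distinctness bound in the binary query model, which is precisely the reason the guarantee is weaker there than in the comparison model.
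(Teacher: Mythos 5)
Your proposal is correct, and it reaches the stated lower bounds by a genuinely different reduction than the paper's. The paper also reduces from element distinctness on the small domain $[n]$, but its gadget is asymmetric: one side of the market gets the prices $1,\dots,n$ and the other gets $x_1,\dots,x_n$, the matching algorithm is run \emph{twice} (once with each side playing the bids), and distinctness is certified by both volumes equalling $n$. Correctness there rests on the strengthened demand--supply inequality (Result 2), applied at the smallest missing value $m$, with a case split on whether the smallest repeated value lies below or above $m$. Your construction instead doubles and shifts ($\pr(b_i)=2x_i$, $\pr(a_j)=2x_j+1$) so that tradability becomes the strict relation $x_i>x_j$, and you derive the exact identity $\vol(M^\ast)=n-\mu$ from the deficiency form of Hall/K\"onig; your extremal-set computation (for fixed largest value $v$ in $S$, the neighborhood depends only on $v$, and taking all bids of value at most $v$ gives deficiency equal to the multiplicity of $v$) is right, so a single run suffices and the threshold is $n-1$ versus at most $n-2$. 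Your model-faithfulness argument also goes through: every comparison or binary query on two constructed prices is determined by one comparison or query on the underlying $x$'s together with the known parities. What the paper's route buys is that the demand--supply inequality is of independent interest in this setting and the argument stays self-contained, avoiding matching-theoretic machinery; what your route buys is a one-shot reduction with an exact volume formula and no need for the strengthened inequality. One point to make explicit when writing this up: instantiate the hard instances with $x_i\in[n]$, since both the paper's comparison-model proof and the $\Omega(n\sqrt{\log n})$ binary-query bound are for the small-domain version of element distinctness; this costs nothing in your reduction, as the constructed prices then lie in $[2n+1]$.
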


From our earlier discussion, Theorem~\ref{thm:fairness} implies that these lower bounds also apply to Problem 1. 

As described above, there is an algorithm~\cite{NiuP13, RSS21} for finding a fair and maximum volume matching (which also works for determining a maximum volume matching) that takes $O(n \log n)$ time. Our lower bound result implies that such an algorithm is time-optimal in the comparison model. 

As we will see later, the Proof of \cref{thm:result1} also applies to \cref{thm:result1Baby}. In this proof,
we obtain the desired lower bound by reducing the element distinctness problem over a small domain to the maximum volume matching problem.\footnote{An entropy-based argument similar to the one used for sorting in~\cite{radhakrishnan2003entropy} also yields a lower bound for the maximum matching problem in the comparison model, but it fails to deliver any non-trivial lower bound in the binary query model.} Our result then follows from applying known lower bounds on the time complexity of the element distinctness problem~\cite{B94, M84}. We include a proof of the lower bound in the comparison model making the argument in~\cite{M84} more precise as it seems to be missing some important details. Our reduction uses a stronger version of the demand-supply inequality of~\cite{NiuP13}, which states that the volume of any matching is upper bounded by the sum of the total demand and the total supply at any price.

\begin{theorem}[demand-supply inequality]
    If $M$ is a matching over a set of bids $B$ and a set of asks $A$, then for all numbers $p$, we have
$$\vol(M) \leq \vol(B_{\geq p}) + \vol(A_{\leq p})$$
where $B_{\geq p}\subseteq B$ consists of bids whose limit prices are at least $p$, $A_{\leq p}\subseteq A$ consists of asks whose limit prices are at most $p$.
\end{theorem}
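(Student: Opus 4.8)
The plan is to fix an arbitrary threshold price $p$ and argue that every transaction in $M$ can be \emph{charged} either to a bid lying in $B_{\geq p}$ or to an ask lying in $A_{\leq p}$; then I would bound the volume charged to each side separately using the capacity constraint (iii) in the definition of a matching. The whole inequality then follows by adding the two bounds.

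First I would establish the key dichotomy: for any transaction $m \in M$ between a bid $b$ and an ask $a$, at least one of $\pr(b) \geq p$ or $\pr(a) \leq p$ must hold. Indeed, by matching property (ii) the bid $b$ and ask $a$ are tradable, so $\pr(b) \geq \pr(a)$; if instead both $\pr(b) < p$ and $\pr(a) > p$ held, we would obtain $\pr(a) > p > \pr(b)$, contradicting tradability. Hence every transaction has its bid in $B_{\geq p}$ or its ask in $A_{\leq p}$. Using this, I would partition the transactions of $M$ into two sets $M_B$ and $M_A$: place a transaction into $M_B$ if its bid lies in $B_{\geq p}$, and otherwise (which by the dichotomy forces its ask into $A_{\leq p}$) into $M_A$. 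By construction $\vol(M) = \vol(M_B) + \vol(M_A)$.

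Next I would bound each piece. For $M_B$, I group its transactions by the bid they involve; every such bid lies in $B_{\geq p}$, and the quantities of the transactions in $M$ involving a fixed bid $b$ sum to $\Q(b,M) \leq \q(b)$ by constraint (iii). Summing over $b \in B_{\geq p}$ and using that $M_B$ is a subset of all transactions gives $\vol(M_B) \leq \sum_{b \in B_{\geq p}} \q(b) = \vol(B_{\geq p})$. The symmetric argument on the ask side, grouping the transactions of $M_A$ by their asks, yields $\vol(M_A) \leq \vol(A_{\leq p})$. Adding these two inequalities gives the claim.

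The argument is short, so the only real subtlety — the point I would be most careful about — is avoiding double counting. A single transaction can satisfy \emph{both} $\pr(b) \geq p$ and $\pr(a) \leq p$ at once, so it is essential to assign each transaction to exactly one of $M_B, M_A$ (via the ``otherwise'' clause above) rather than naively bounding the whole volume against each side; otherwise the capacity constraints would not combine into the stated sum. Once the partition is fixed, the capacity bounds apply cleanly to each group and the rest is immediate.
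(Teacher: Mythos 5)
Your proof is correct and takes essentially the same route as the paper, which establishes this bound (in its strengthened form, \cref{thm:result2}) by partitioning $M$ into the transactions whose bid price is at least $p$ and those whose bid price is below $p$ --- exactly your $M_B$/$M_A$ split --- and then bounding each part via the capacity constraints on the bids of $B_{\geq p}$ and the asks of $A_{\leq p}$ respectively. Note that your dichotomy already shows the asks of $M_A$ lie in $A_{<p}$ (strictly), which is precisely the strengthening the paper records and later needs for its reduction.
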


Our strengthening, which is crucially required for our reduction to work, comes from observing that the above inequality holds even if we replace  $\vol(B_{\geq p})$ by $\vol(B_{> p})$ or $\vol(A_{\leq p})$ by $\vol(A_{< p})$, and can be stated as follows.

\begin{restatable}[Result 2]{theorem}{resulttwo}
\label{thm:result2}
If $M$ is a matching over a set of bids $B$ and a set of asks $A$, then for all numbers $p$, we have
$$\vol(M) \leq \vol(B_{>p}) + \vol(A_{<p}) + \min(\vol(B_{=p}), \vol(A_{=p})),$$
where $B_{> p}\subseteq B$ consists of bids whose limit prices are greater than $p$, $A_{< p}\subseteq A$ consists of asks whose limit prices are less than $p$, and $B_{=p}\subseteq B$ (or $A_{=p}\subseteq A$) consists of bids (or asks) whose limit prices are exactly $p$.
\end{restatable}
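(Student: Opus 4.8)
The plan is to partition the transactions of $M$ into three groups according to how the limit prices of the participating bid and ask compare to the threshold $p$, and then to bound each group's contribution to $\vol(M)$ separately. Fix a transaction $m \in M$ between a bid $b$ and an ask $a$; since $M$ is a matching, $b$ and $a$ are tradable, so $\pr(b) \geq \pr(a)$. I would call $m$ of Type~I when $\pr(b) > p$, of Type~II when $\pr(b) \leq p$ and $\pr(a) < p$, and of Type~III otherwise. The key structural point to verify is that these cases are exhaustive and disjoint: if $m$ falls into neither of the first two types then $\pr(b) \leq p$ and $\pr(a) \geq p$, and together with $\pr(a) \leq \pr(b)$ this pins down $\pr(a) = \pr(b) = p$, so $b \in B_{=p}$ and $a \in A_{=p}$. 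Writing $V_{\mathrm{I}}, V_{\mathrm{II}}, V_{\mathrm{III}}$ for the total transaction quantities in the three groups, this gives $\vol(M) = V_{\mathrm{I}} + V_{\mathrm{II}} + V_{\mathrm{III}}$.

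Next I would bound each term by a charging argument. Every Type~I transaction uses a bid from $B_{>p}$; grouping such transactions by their bid $b$ and using that the quantities of the transactions involving $b$ sum to at most $\Q(b,M) \leq \q(b)$, I get $V_{\mathrm{I}} \leq \sum_{b \in B_{>p}} \q(b) = \vol(B_{>p})$. The symmetric argument on the ask side gives $V_{\mathrm{II}} \leq \vol(A_{<p})$. One point to be careful about is that a single order can take part in transactions of several types---an ask in $A_{<p}$, say, may appear in both Type~I and Type~II transactions---but only the Type~II portion of its traded quantity is charged here, and that portion is still at most $\q(a)$, so the bound is unaffected.

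For the last group I would use that each Type~III transaction involves a bid in $B_{=p}$ \emph{and} an ask in $A_{=p}$. Charging these transactions to their bids gives $V_{\mathrm{III}} \leq \vol(B_{=p})$, whereas charging them to their asks gives $V_{\mathrm{III}} \leq \vol(A_{=p})$; taking the better of the two yields $V_{\mathrm{III}} \leq \min(\vol(B_{=p}), \vol(A_{=p}))$, and summing the three bounds produces the claimed inequality. I expect this Type~III step to be the crux of the argument, and it is exactly where the strengthening over the ordinary demand-supply inequality comes from: any transaction that escapes both the $B_{>p}$ bucket and the $A_{<p}$ bucket must have its bid and its ask priced exactly at $p$, so it can be billed to whichever equal-price side is smaller instead of being double-counted on both.
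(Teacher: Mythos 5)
Your proof is correct. It takes a route that is close in spirit to the paper's but organized differently: you use a single three-way partition of the transactions (bid above $p$; ask below $p$; both exactly $p$) and bound each piece by charging its quantity to the bids or asks it consumes, with the $\min$ term arising directly from the observation that a transaction escaping the first two buckets must have $\pr(b)=\pr(a)=p$ and can therefore be billed to whichever equal-price side is smaller. The paper instead proves the two inequalities $\vol(M) \leq \vol(B_{>p}) + \vol(A_{<p}) + \vol(B_{=p})$ and $\vol(M) \leq \vol(B_{>p}) + \vol(A_{<p}) + \vol(A_{=p})$ separately, each via a two-way partition of $M$ by the bid's price (at threshold $\geq p$ versus $< p$ for the first, $> p$ versus $\leq p$ for the second), observing that each part is itself a matching over a restricted bid or ask set and invoking the basic fact $\vol(M') \leq \min(\vol(B'), \vol(A'))$ as a black box; the $\min$ then appears only when the two inequalities are combined. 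Your version is slightly more direct and makes the source of the strengthening more transparent, while the paper's version is more modular in that it reduces everything to one reusable observation about submatchings; you were also right to flag, and correctly dispose of, the subtlety that an order in $B_{=p}$ or $A_{=p}$ can participate in transactions of more than one type, since only its type-specific traded quantity is charged and that is still bounded by $\Q(w,M) \leq \q(w)$.
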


For Problem 2, we obtain a linear time algorithm.

\begin{restatable}[Result 3]{theorem}{resultthree}
\label{thm:result3}
There exists an algorithm that takes as input a set of bids $B$ and a set of asks $A$, and outputs a uniform and fair matching with maximum volume in $O(n)$ time, where $n=|B|+|A|$.
\end{restatable}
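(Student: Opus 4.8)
The plan is to split the task into a price-discovery computation followed by a matching-construction step, each performed in $O(n)$ time. The starting observation is that for a fixed candidate transaction price $p$, a uniform matching at price $p$ may pair a bid $b$ with an ask $a$ only when $\pr(a)\le p\le \pr(b)$; hence the bids that can participate are exactly those in $B_{\ge p}$ and the asks exactly those in $A_{\le p}$, and since every such bid is tradable with every such ask at price $p$, the maximum volume of a uniform matching at price $p$ equals $f(p):=\min(\vol(B_{\ge p}),\vol(A_{\le p}))$. Writing $D(p)=\vol(B_{\ge p})$ and $S(p)=\vol(A_{\le p})$, note that $D$ is non-increasing and $S$ is non-decreasing in $p$, so $f$ is unimodal (it rises while $D>S$ and falls once $D<S$) and, being a step function, attains its maximum at some limit price occurring in the input. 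The maximum uniform volume is therefore $\max_p f(p)$, and it suffices to find a limit price $p^{*}$ achieving it together with the value $V^{*}=f(p^{*})$.

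I would then compute $p^{*}$ by a prune-and-search maintaining a set of active orders and two accumulators $d_0,s_0$ that record volume already committed to demand/supply by earlier rounds, so that at any moment $D(m)=d_0+\vol(\{\text{active bids of price}\ge m\})$ and $S(m)=s_0+\vol(\{\text{active asks of price}\le m\})$. Each round I pick $m$ to be the median of the active limit prices via linear-time selection, compute $D(m),S(m)$ by one scan, and record $f(m)$. The pruning rule is driven by elementary monotonicity: if $D(m)\ge S(m)$ then $f(p)\le S(p)\le S(m)=f(m)$ for every $p\le m$, so no price at most $m$ beats $f(m)$ and I may discard every active order of price $\le m$ (adding the quantities of the asks among them to $s_0$ and dropping the bids, which contribute $0$ to $D(p)$ for $p>m$); symmetrically, if $D(m)\le S(m)$ then $f(p)\le f(m)$ for every $p\ge m$, and I discard every active order of price $\ge m$ (moving such bids into $d_0$ and dropping such asks). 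Since $m$ is the median and I remove \emph{all} orders at the pivot price along with one side, each round discards at least half of the active orders, yielding $T(k)=T(k/2)+O(k)$ and total time $O(n)$; correctness follows by induction, since the recorded values $f(m)$ cover the pivot prices while the monotonicity inequalities guarantee the discarded side contains no strictly better price.

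Finally, given $p^{*}$ and $V^{*}$ I would build an explicit fair uniform matching at price $p^{*}$ in $O(n)$. Assume $D(p^{*})\le S(p^{*})$ (the other case is symmetric); then $V^{*}=\vol(B_{\ge p^{*}})$, so every bid of $B_{\ge p^{*}}$ must be fully traded (their total quantity is exactly $V^{*}$ and only they can trade at $p^{*}$), which makes the matching fair on bids. On the ask side I must trade exactly $V^{*}$ units, choosing the most competitive asks of $A_{\le p^{*}}$; using a weighted linear-time selection on $A_{\le p^{*}}$ ordered by $\succ$, I locate the marginal ask $a^{*}$ so that all asks more competitive than $a^{*}$ are fully traded, $a^{*}$ is partially traded to reach $V^{*}$, and the rest are untraded, which is fair on asks. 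To convert the per-order traded quantities into transactions, I note that all selected bids and asks are mutually tradable at $p^{*}$, so a single two-pointer pass over the two lists emits at most $|B|+|A|$ transactions, each priced at $p^{*}$, with no sorting required.

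The main obstacle I anticipate is the bookkeeping of ties at the pivot price: naively keeping orders of price exactly $m$ active would merely narrow the price interval without shrinking the order set when many orders share a price, breaking the linear-time bound. The resolution—evaluating $f(m)$ exactly at the pivot, recording it, and then recursing \emph{strictly} to one side so that all orders at price $m$ can be committed to an accumulator or discarded—is precisely what simultaneously guarantees the halving of the active set and the correctness of the search. A secondary point needing care is verifying that the constructed allocation is genuinely fair with respect to $\succ$ (including the timestamp tie-breaks) and has volume exactly $V^{*}$, which I would argue from the fact that the binding side is forced to trade fully while the non-binding side is filled in competitiveness order by the selection step.
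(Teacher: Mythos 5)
Your proposal is correct, and it takes a genuinely different route from the paper. The paper keeps the structure of the known $O(n\log n)$ algorithm $\UM$ (sort both sides by competitiveness, then match greedily top-down) and only removes the sorting: it bisects the bids at the median-by-competitiveness bid, cuts the asks at the same cumulative quantity via weighted selection, and tests tradability of that bid--ask pair to decide whether to exhaustively match the top halves and recurse below, or discard the bottom halves and recurse above; correctness is proved by showing the output gives every order the same traded quantity as $\UM$, whose correctness is imported from prior work. You instead work on the price axis: you characterize the optimal uniform volume as $\max_p \min\bigl(\vol(B_{\geq p}), \vol(A_{\leq p})\bigr)$, locate a maximizing limit price by prune-and-search, and then construct the matching directly (fully trade the binding side, take a competitiveness-prefix of the other side via weighted selection). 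Both run in linear time on top of linear-time selection; yours is self-contained, re-deriving optimality and fairness from first principles and making the equilibrium price explicit, while the paper's leans on the prior correctness theorem. Two minor remarks: your ``unimodality'' of $f$ is not quite literal, but the two monotonicity inequalities you actually use for pruning are exactly what is needed; and your per-price volume formula relies on the transaction price being constrained to lie in $[\pr(a),\pr(b)]$, which is indeed the paper's intended definition.
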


As noted earlier, this is an improvement over the previous algorithms~\cite{WWW98,zhao2010maximal,RSS21} that take $O(n\log n)$ time to match $n$ requests. 
Note that \cref{thm:result3} immediately implies \cref{thm:result2Baby}, as ensuring fairness does not reduce the size of a uniform matching, as observed earlier.

Our improvement for uniform-price matching, roughly speaking, comes from getting rid of the sorting step in the algorithm of~\cite{RSS21} as described earlier. We instead compute the "medians" of the bids and asks in linear time which bisects the sets of bids and asks into two "equal halves" each. Based on whether the medians are tradable or not, we either can get rid of half of the bids and asks, or we can match half of the input in linear time, thus in either case reducing the problem size by half. This results in a linear time algorithm. Note that obtaining a linear time algorithm in the volume of the input orders is much simpler than our result.

\subsection{Other related work}

In \cite{GS22,lpar23}, continuous auctions were studied, that form a class of double auctions that complement the call auctions. The authors presented an algorithm that implements the continuous auction and runs in $O(n \log n)$ time for processing $n$ {\it instructions} (requests/orders). They also showed that their algorithm is time-optimal by reducing the task of sorting to continuous auctions.

\vspace{0.5cm}\noindent\textbf{Organization of the paper.} In \cref{sec:strongDemandSupply}, we prove \cref{thm:result2}. Then, \cref{thm:result1,thm:result1Baby} are proved in Section~$\ref{sec:maximum}$. Next, in \cref{sec:uniform}, \cref{thm:result3} is proved. Finally, we conclude the paper in \cref{sec:conclusions} with some concluding remarks.

\section{Stronger demand-supply inequality}\label{sec:strongDemandSupply}
In this section, we prove Theorem~\ref{thm:result2}.

\resulttwo*

\begin{proof}
[Proof of \cref{thm:result2}]
First observe that the volume of any matching is upper bounded by the volume of all bids as well as the volume of all asks, i.e., if $M$ is a matching over $(B,A)$, then 
$$\vol(M) 
\le \vol(B)\quad\text{ and } \quad\vol(M) \le \vol(A).$$
 To prove Theorem~\ref{thm:result2}, it suffices to prove the following two inequalities. 

\begin{align*} \vol(M) &\leq \vol(B_{> p}) + \vol(A_{< p}) + \vol(B_{=p});\\
 \vol(M) &\leq \vol(B_{> p}) + \vol(A_{< p}) + \vol(A_{=p}).
\end{align*}

To prove the first inequality we partition the matching $M$ into two sets: $M_1=\{(b,a,q,p') \in M \mid  \pr (b) \geq p\}$
consisting of all transactions in $M$ whose participating bid price is at least $p$ 
and 
$M_2=\{(b,a,q,p') \in M \mid  \pr (b) < p\}$ 
consisting of all transactions in $M$ whose participating bid price is strictly less than $p$. 
Thus, $\vol(M) = \vol(M_1) + \vol(M_2)$. 

It is easy to see that $M_1$ is a matching over sets of bids $B_{\geq p}$ and asks $A$, and hence from the above observation, $\vol(M_1) \leq \vol(B_{\geq p}) = \vol(B_{> p}) + \vol(B_{= p})$.

Next, we prove that $M_2$ is a matching over sets of bids $B$ and asks  $A_{< p}$. Consider a transaction $m=(b,a,q,p')$ from $M_2$, which is between a bid $b$ and an ask $a$. Since $m \in M$, $\pr (b) \ge \pr (a)$, and from the definition of $M_2$, we have $\pr (b) < p$. This implies $\pr(a) < p$, i.e., asks of $M_2$ come from $A_{< p}$. Hence, $M_2$ is a matching over $(B,A_{< p})$, and applying the above observation again, we have $\vol(M_2) \le \vol(A_{< p})$.

Combining, we have $\vol(M) = \vol(M_1) + \vol(M_2) \leq \vol(B_{> p}) + \vol(B_{= p}) + \vol(A_{< p})$, which completes the proof of first inequality. 

Similarly, we can prove the second inequality by first partitioning the matching $M$ into $M_1=\{(b,a,q,p') \in M \mid  \pr (b) > p\}$ and $M_2=\{(b,a,q,p') \in M \mid  \pr(b) \leq p\}$, noticing that $M_1$ is a matching over $(B_{>p}, A)$ and $M_2$ is matching over $(B, A_{ \leq p})$, and finally, using the observation again to obtain the desired inequality. 
\end{proof}

\section{Maximum volume matching lower bound}
\label{sec:maximum}
In this section we prove Theorem~\ref{thm:result1} (note the proof also works for \cref{thm:result1Baby}), i.e., we establish a lower bound of $\Omega(n \log n)$ on the running time of any algorithm that computes a maximum volume matching on a list of bids $B$ and a list of asks $A$, where $n=|B|
+|A|$. Let $\MM$ be an algorithm for the maximum matching problem. We reduce the element distinctness problem on a small domain to $\MM$ in linear time. 

\noindent\textbf{Element Distinctness Problem (on small domain)}: Given an input $(x_1,\cdots, x_n)$, where each $x_i\in[n]$, check whether there are distinct indices $i$ and $j$ such $x_i = x_j$.\footnote{For $n \ge 1$, $[n]$ denotes the set $\{1, 2, \ldots , n\}$.}

Our result then immediately follows from the fact that the element distinctness problem on a small domain requires time $\Omega(n \log n)$ in the comparison model~\cite{M84} (for which we include a proof in the next subsection) and time $\Omega(n\sqrt{\log n})$ in the binary query model~\cite{B94}.

In the reduction, we will make use of the stronger version of the demand-supply inequality Theorem~\ref{thm:result2}, which was proved in~\cref{sec:strongDemandSupply}.

\noindent\textbf{The reduction}.
Given an instance $X = (x_1,\cdots, x_n)$ for the element distinctness problem, we construct two sets of orders $\Omega = \{\omega_1, \omega_2, \ldots, \omega_n\}$ and $\Lambda = \{\lambda_1, \lambda_2, \ldots, \lambda_n\}$ such that the quantity of each order in $\Omega \cup \Lambda$ is set to $1$. We set the prices as follows. $\pr(\omega_i) = i$ and $\pr(\lambda_i) = x_i$. Observe that the prices of orders in $\Omega$ are all distinct. Next, we run the maximum matching algorithm $\mathsf{MM}$ twice on these inputs by first treating $\Omega$ as the set of bids and then treating $\Omega$ as the set of asks to obtain two matchings $M_1$ and $M_2$, respectively; $M_1 = \mathsf{MM}(\Omega, \Lambda)$ and $M_2 = \mathsf{MM}(\Lambda, \Omega)$. 

We now claim that if elements of $X$ are all distinct then $\vol(M_1) = \vol(M_2) = n$, otherwise $\vol(M_1) < n \text{ or } \vol(M_2) < n$. Note that if we show this claim the reduction is complete, as we can then solve the element distinctness problem mentioned above in time $O(n)$ plus twice the time taken by $\mathsf{MM}$ on $2n$ orders.

It is easy to see that if the $x_i$'s are all distinct, then  $\vol(M_1)=\vol(M_2)=n$; the bid with price $i$ is matched with the ask with price $i$.

Now, we show that if the $x_i$'s are not distinct, then $\vol(M_1) \leq n-1$ or $\vol(M_2)\leq n-1$. From the pigeon-hole principle, if two elements are repeating in $X$, then one of the elements from $[n]$ must be missing from $X$. Let the smallest missing element in $X$ be $m$ and the smallest repeating element in $X$ be $r$. 

For a set of orders $\Omega$, we define $\Omega_{=t}:=\{\omega\in\Omega \mid \pr(\omega)=t\}$,  $\Omega_{<t}:=\{\omega\in\Omega \mid \pr(\omega)<t\}$, and $\Omega_{>t}:=\{\omega\in\Omega \mid \pr(\omega)>t\}$.
Now, observe that $\vol({\Lambda}_{=m}) = 0$ (since $m$ is missing from $X$ which is the set of prices of $\Lambda$), which implies $\min(\vol({\Omega}_{=m}), \vol({\Lambda}_{=m})) = 0$.

We now consider two cases: $m < r$ or $m > r$.

\noindent\textbf{Case: $m < r$.} In this case we will show that $\vol(M_1) < n$. Observe that the number of elements in $X$ that are at most $m$ is precisely $m-1$, since no element smaller than $m$ is repeating and $m$ itself is missing. Here, the input $x_i$'s are set to be the ask prices, i.e., $\Lambda$ is the set of asks, and $\Omega$ is the set of bids. Therefore, $\vol({\Lambda}_{< m}) = |\{\lambda \in \Lambda \mid \pr(\lambda) \le m\}| = m-1$ and $\vol({\Omega}_{>m}) = n - m$ . Hence, from Theorem~\ref{thm:result2}, \begin{align*}
    \vol(M_1) &\le \vol({\Omega}_{> m}) + \vol({\Lambda}_{< m}) + \min(\vol({\Omega}_{=m}),\vol({\Lambda}_{=m}))\\ &=  (n - m) + (m - 1) + 0 = n - 1.
\end{align*}

\noindent\textbf{Case: $m > r$. } In this case we will show that $|M_2| < n$. Note that no element in $X$ is missing below $m$ but some elements are repeating. Thus, the number of elements that are greater than $m$ are at most $n-m$. Hence, similar to the above case, applying Theorem~\ref{thm:result2}, we have $
    \vol(M_2) \le \vol({\Lambda}_{> m}) + \vol({\Omega}_{< m}) + \min(\vol({\Lambda}_{=m}),\vol({\Omega}_{=m}))\leq  (n - m) + (m - 1) + 0 = n - 1.$\qed

\subsection{Element distinctness on small domain}

We now show a lower bound on the number of comparisons needed to solve the element distinctness problem on a small domain: Given an input $(x_1,\cdots, x_n)$ where each $x_i\in[n]$, check whether there exists $i$ and $j$ such that $i\neq j$ and $x_i=x_j$.

Any algorithm for element distinctness can be seen as a decision tree where each internal node is labeled by a comparison $X_i : X_j$, for some $i\neq j$, and each leaf is labeled either Yes or No. On a given input $(x_1,\cdots, x_n)$ the decision tree is applied as follows. We start at the root node, traverse a path to a leaf, and declare the label of the leaf as the output. On encountering a node labeled $X_i:X_j$, if $x_i\leq x_j$, we take its left child as the next node in the path, otherwise, we take its right child. Inputs corresponding to permutations on $[n]$ must receive the answer Yes, and inputs where $x_i=x_j$ for some $i\neq j$ must receive the answer No. 

We fix a decision tree $T$ for the element distinctness problem and prove that its height is at least $\Omega (n \log n)$. We make the following claim.

\begin{claim}
    Each permutation on $[n]$ ends up in a distinct leaf of $T$. 
\end{claim}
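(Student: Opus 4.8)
The plan is to show that two distinct permutations $\sigma \neq \tau$ on $[n]$ cannot reach the same leaf of $T$. The key idea is a standard adversary/separation argument: if both permutations followed the exact same root-to-leaf path, then at every comparison node $X_i : X_j$ along that path, the two permutations would have to agree on the outcome, i.e., $\sigma(i) \le \sigma(j)$ iff $\tau(i) \le \tau(j)$ (using $x_k = \sigma(k)$ to denote the input corresponding to $\sigma$). Note that since permutations have distinct entries, the relation $\le$ at each comparison is strict, so each comparison reveals a genuine strict order between two coordinates. I would then argue that the set of comparisons made along the shared path induces a set of order constraints consistent with both $\sigma$ and $\tau$, and derive a contradiction from the fact that the path ends at a \emph{Yes}-labeled leaf.

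First I would set up notation: for a permutation $\sigma$, let $x^\sigma = (\sigma(1), \dots, \sigma(n))$ be the corresponding input, which receives answer \emph{Yes} since its entries are distinct. Suppose for contradiction that $\sigma \neq \tau$ both land in the same leaf $\ell$. Then they traverse the same path $P$ from root to $\ell$, and at each node $X_i:X_j$ on $P$ the two inputs branch the same way, so the comparisons on $P$ yield a set $C$ of strict inequalities of the form $x_i < x_j$ that are simultaneously satisfied by both $x^\sigma$ and $x^\tau$. The crucial step is to use these shared constraints to build a new input that is \emph{not} a permutation but still reaches $\ell$, thereby forcing $\ell$ to answer both \emph{Yes} and \emph{No} --- a contradiction.

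Concretely, since $\sigma \neq \tau$, there exist indices where the two permutations disagree; the plan is to exploit this to construct an interpolating input. The cleanest route I expect is: from the strict partial order generated by the comparisons in $C$, any linear extension that refines it corresponds to some input reaching $\ell$, and because $\sigma$ and $\tau$ are two distinct linear extensions of the same partial order, that partial order is not a total order --- so there are two incomparable coordinates $i, j$ (i.e., no chain of constraints in $C$ forces their order). I can then assign equal values to coordinates $i$ and $j$ (and extend to a consistent assignment on the rest) to produce an input $y$ that satisfies all constraints in $C$, hence follows path $P$ to leaf $\ell$, but has $y_i = y_j$ with $i \neq j$, so $y$ must be answered \emph{No}. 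Since $\ell$ also answers \emph{Yes} (being the leaf of a permutation), this contradiction establishes the claim.

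The main obstacle will be justifying rigorously that the comparison constraints $C$ collected along $P$ are the \emph{only} constraints determining which inputs reach $\ell$ --- in particular, that any input consistent with all outcomes in $C$ genuinely follows $P$ (this is immediate from how the decision tree branches, but must be stated) --- and that the partial order they generate, having at least two distinct linear extensions $\sigma$ and $\tau$, must contain an incomparable pair that I can collapse to equal values while maintaining consistency. I would handle the latter by the elementary fact that a finite partial order is a total order if and only if it has a unique linear extension; since $\sigma$ and $\tau$ are distinct linear extensions, incomparability is guaranteed, and collapsing an incomparable pair to a common value preserves feasibility of the remaining strict inequalities (assigning this common value carefully, e.g. a value strictly between the appropriate neighbors, so no strict inequality in $C$ is violated). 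Once the claim is proved, the height lower bound follows immediately: $T$ must have at least $n!$ distinct leaves, so its height is at least $\log_2(n!) = \Omega(n \log n)$.
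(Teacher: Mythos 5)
Your proposal is correct and follows essentially the same route as the paper: both collect the order constraints along the shared root-to-leaf path into a poset, extract two incomparable coordinates (you via the fact that a poset with two distinct linear extensions cannot be a total order, the paper via Mirsky's theorem), and collapse them into a No-instance that still satisfies every constraint on the path and hence reaches the same Yes-labeled leaf. The one detail to be careful with --- which the paper handles explicitly via its $Q \uplus A \uplus R$ decomposition and explicit integer assignment --- is that the collapsed input must use values in $[n]$ to be a legal instance of the small-domain problem, so rather than inserting ``a value strictly between the appropriate neighbors'' you should re-rank a linear extension of the quotient poset, which needs only $n-1$ distinct integer values.
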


This immediately implies that the number of leaves is at least $n!$ and hence the height of the tree is at least $\Omega(\log n!)= \Omega(n \log n)$. We now turn to proving the above claim.

For the sake of contradiction we assume two distinct permutations $\sigma=(\sigma_1,\cdots, \sigma_n)$ and $\tau=(\tau_1,\cdots,\tau_n)$ end up at the same leaf $L$ of $T$. To obtain a contradiction we will find an input $(S_1,\cdots, S_n)$ where $S_i= S_j$ for some $i\neq j$ which also ends up in the leaf $L$.

We define a poset $P(L)$ on the set of symbols $X=\{X_1,\cdots,X_n\}$. In our poset when $X_i$ and $X_j$ are related, we will write $X_i \leq X_j$ and call it a constraint. For each node labeled $X_i: X_j$ in the path from the root to the leaf $L$ in the tree $T$, we have a constraint of the from $X_i \leq X_j$ or $X_j \leq X_i$ (in fact, $X_j < X_i$ suffices, but we use the weaker form) which must be satisfied for a computation to take this path. Let $C$ be the set of all such constraints. We define $C(L)= \bigcap \{\hat C \mid C\subseteq \hat C \text{ such that $(X,\hat C)$ is a poset}\}$; notice that this is a non-empty intersection, as we have at least one poset, namely the total order $X_{\sigma(1)} \leq X_{\sigma(2)} \leq \cdots \leq X_{\sigma(n)}$ which satisfies all constraints in $C$ as $\sigma$ ends up in the leaf $L$. We define $P(L):= (X, C(L))$. 

We say an input $(x_1,\cdots, x_n)$ {\it respects} a poset $P$ if $x_i \leq x_j$ whenever $X_i \leq X_j$ is a constraint in $P$. The following proposition is easy to prove.

\begin{proposition}
An input reaches the leaf $L$ iff it respects the poset $P(L)$.
\end{proposition}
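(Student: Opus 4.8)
The plan is to prove the biconditional by interposing an intermediate characterization: an input reaches $L$ if and only if it satisfies every constraint read off the root-to-$L$ path (the set $C$), and this in turn holds if and only if the input respects $P(L)$. I would first dispose of the second equivalence, which is purely order-theoretic, and then establish the first by tracing the computation down the path by induction on its length.

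For the equivalence between satisfying the path constraints $C$ and respecting $P(L)=(X,C(L))$, one direction is immediate: since $C\subseteq C(L)$, any input respecting $C(L)$ satisfies every constraint in $C$. For the converse, I would observe that an input $(x_1,\dots,x_n)$ satisfying all constraints of $C$ induces a preorder on $X$, namely $X_i\preceq X_j$ whenever $x_i\le x_j$, and this preorder contains $C$. Because $\sigma$ reaches $L$, the total order $X_{\sigma(1)}\le\cdots\le X_{\sigma(n)}$ contains $C$, so $C$ is acyclic and its reflexive-transitive closure is already antisymmetric; hence $C(L)$ coincides with this closure and is the least preorder containing $C$. It is therefore contained in the preorder induced by the input, which means $x_i\le x_j$ for every constraint $X_i\le X_j$ of $C(L)$, i.e.\ the input respects $P(L)$.

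For the first equivalence I would induct on the length of the root-to-$L$ path, showing that an input satisfying $C$ follows exactly the recorded branch at each node, and conversely that reaching $L$ forces each recorded constraint. At a node labelled $X_i:X_j$ on the path, if the path descends to the left child then the recorded constraint is $X_i\le X_j$, and the input takes the left child precisely when $x_i\le x_j$, so on left branches satisfying the constraint and taking the branch coincide exactly. The forward direction (reaching $L$ implies satisfying $C$) is then immediate for both kinds of branches, since descending right means $x_i>x_j$, which certainly yields the weak recorded constraint $x_j\le x_i$.

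The main obstacle is the tie case at a right branch: the branch is taken under the \emph{strict} test $x_i>x_j$, whereas the recorded constraint is only the weak $X_j\le X_i$, so satisfying the weak constraint does not by itself exclude $x_i=x_j$ and hence does not immediately force the right branch. I would resolve this by exploiting antisymmetry of $P(L)$ together with the authors' remark that the strict form $X_j<X_i$ suffices on right branches. At a right-branch node the constraint $X_j\le X_i$ lies in $C(L)$, and the reverse constraint $X_i\le X_j$ cannot also lie in $C(L)$, for otherwise antisymmetry would identify the distinct symbols $X_i$ and $X_j$; thus $X_i$ and $X_j$ are strictly ordered in $P(L)$. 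Recording the right-branch constraint in its genuine strict form $X_j<X_i$ --- which is precisely the condition under which that branch is taken --- then makes ``respecting $P(L)$'' force $x_j<x_i$ at every right-branch node, so the input descends right exactly as the path does. Carrying the induction down to the leaf makes reaching $L$ and respecting $P(L)$ equivalent, completing the proof; the only genuine care required throughout is this strict-versus-weak reading of the right-branch constraints.
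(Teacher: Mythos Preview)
The paper does not actually prove this proposition; it merely asserts it is ``easy to prove.'' Your decomposition into two equivalences is sensible, and you are right to flag the tie case at right branches as the only subtle point. In fact you have put your finger on a genuine defect in the statement itself, not just a difficulty in the proof.

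Your proposed resolution, however, does not close the gap. From antisymmetry of $P(L)$ you correctly deduce that at a right-branch node one has $X_j\le X_i$ in $C(L)$ while $X_i\le X_j$ is absent, i.e.\ $X_j$ lies strictly below $X_i$ in the \emph{poset}. But under the paper's definition of ``respects,'' this only forces $x_j\le x_i$ in the \emph{input}; nothing in that definition translates a strict relation in the poset into a strict numerical inequality. So the sentence ``then makes `respecting $P(L)$' force $x_j<x_i$'' is exactly the step that fails. Concretely: with $n=2$, root labelled $X_1:X_2$, and $L$ the right child, one has $C(L)=\{X_2\le X_1\}$ (plus reflexivity); the input $(1,1)$ respects $P(L)$ yet goes left, so the backward implication is simply false as stated.

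What you are really doing in your last paragraph is changing the definition---replacing the weak right-branch constraints by strict ones, as the authors' parenthetical remark invites---and that is the correct fix, but it proves a different proposition, not the one written. You should say so plainly. Alternatively, since the paper only ever applies the proposition to the specific input $S$, you can instead prove directly that $S$ reaches $L$: the only equal coordinates of $S$ lie in the antichain $A$, hence are unrelated in $P(L)$, so every right-branch constraint $X_j\le X_i$ in $C$ (being a nontrivial relation in $P(L)$) involves coordinates with $s_j\ne s_i$, whence $s_j<s_i$ and the right branch is taken. Either route repairs the argument; your current write-up stops one step short of either.
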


Next, observe that since $\sigma$ and $\tau$ are two distinct permutations that respect $P(L)$, the length of a largest chain in $P(L)$ is at most $n-1$.\footnote{At this point, the proof in~\cite{M84} instantly concludes that there must be two elements in the poset that are not related and fixes an input which respects the poset and where these two coordinates are equal, to complete the proof, without specifying why such an input exists. We fill such gaps in the proof.} Mirsky's theorem~\cite{Mirsky} then implies that there is an antichain decomposition of $P(L)$ with at most $n-1$ antichains.
Since $X$ has $n$ elements, there must be an antichain in $P(L)$ that has at least two elements. Let $A$ be a maximal antichain in $P(L)$ that has at least two elements. We can write $X$ as a disjoint union of three sets $X =  Q \uplus A \uplus R $, where $
Q$ consists of elements below $A$ and $R$ consists of elements above $A$ in the poset $P(L)$: $ Q= \{ X_i \in X \setminus A \mid \text{ there exists } X_j\in A \text{ such that } X_i \leq X_j \}$ and $R = \{ X_i \in X \setminus A \mid \text{ there exists } X_j\in A \text { such that } X_j \leq X_i \}$ (observe that an element cannot be both above and below $A$). The constraints in $C(L)$ restricted to $Q$ and restricted to $R$ give rise to the posets: $P_Q$ and $P_R$. Now we define the poset $(X,D)$ such that $C(L)\subseteq D$ as follows. Take a linear extension of $P_Q$: $X_{t_1} \leq \cdots \leq X_{t_q}$ and add all the implied constraints to $D$. Do not add any constraints between the elements of the antichain $A=\{X_{t_{q+1}}, X_{t_{q+2}},\cdots, X_{t_{q+a}}\}$. Take a linear extension of $P_R$:
$X_{t_{q+a+1}}\leq \cdots \leq X_{t_n}$ and add all the implied constraints to $D$. Furthermore, add the following constraints to $D$: For each $X_i\in X$, add $X_i\leq X_i$.
For each $X_i\in Q$ and $X_j\in A\cup R$, add $X_i\leq X_j$. For each $X_i\in A$ and each $X_j\in R$, add $X_i\leq X_j$. Clearly, $C(L)\subseteq D$.
Now fix the input $S$ such that
$S_{t_1} = 1,S_{t_2} = 2,\cdots, S_{t_q}=q
$, $S_{t_{q+1}}=S_{t_{q+2}}=\cdots= 
S_{t_{q+a}}= q+1$,
$S_{t_{q+a+1}}=q+2, S_{t_{q+a+2}}=q+3,\cdots, S_{t_n} = n-(a-1))$.
Now, $S$ clearly respects the poset $(X,D)$. Thus, it also respects the poset $P(L)$  (as $C(L)\subseteq D$) and reaches the leaf $L$. Since $a =|A|\geq 2$, $S$ is a No instance which reaches $L$, a contradiction.\qed

\section{Uniform price matching}
\label{sec:uniform}
In this section, we prove Theorem~\ref{thm:result3}, i.e., we obtain a linear time algorithm for uniform price matching. We begin by first describing the previous $O(n \log n)$ algorithm as described in~\cite{RSS21}, which we will critically use in establishing the correctness of our linear time algorithm.

We may assume that the input lists of asks $A$ and bids $B$ are such that $\vol(A) = \vol(B)$. This can be achieved by adding a dummy order. Say $\vol(A) < \vol (B)$, then add a dummy ask with $\pr = \infty$ and $\q = \vol(B) - \vol (A)$. If $\vol(A) > \vol(B)$, then add a dummy bid with $\pr = -1$ and $\q = \vol(A) - \vol(B)$. Strictly speaking, according to our definition, we cannot set the $\pr$ to be $\infty$ or $-1$, but our definitions can be adjusted to allow for this.

\subsection{Previous algorithm}
We describe the algorithm in~\cite{RSS21}, which we denote by $\UM$. The transaction prices that are set by $\UM$ are not guaranteed to be uniform initially. After the matching is produced, a linear time algorithm is employed to put a uniform transaction price for all the transactions. For example, the maximum of the limit prices of the asks that participate in the output matching can be set as the transaction price of all the transactions in the output matching.

Given the list of bids B and asks A, $\UM$ first sorts the lists based on their respective competitiveness with the most competitive order on the top. It then invokes $\match$ on the sorted lists $B$, $A$, and an empty matching $M$. Throughout the algorithm, $M$ can only grow, and at the end of the algorithm, $M$ will contain a desired uniform-price matching.

\begin{algorithm}[H]
 \caption{Uniform Matching $\UM$}\label{process1}
\begin{algorithmic} 
\Function{$\UM$}{Bids $B$, Asks $A$}
\State Sort the lists $B$  and $A$ based on their respective competitiveness
\State \Return $\match(B,A,\emptyset)$
\EndFunction
\end{algorithmic} 
\end{algorithm}

$\match$ on $(B,A,M)$ picks the most competitive bid $b$ and the most competitive ask $a$ from $B$ and $A$ respectively, and checks if they are tradable. If they are not tradable, it returns $M$, and the algorithm terminates. 

Otherwise, if they are tradable, it adds a transaction between $b$ and $a$ with transaction quantity $q = \min\{\q(a),\q(b)\}$ to $M$, reduces the quantity of $a$ and $b$ by $q$ in the lists $B$ and $A$. Note that at least one of $b$ or $a$ must be fully exhausted and removed completely from $B$ or $A$. It then recursively calls $\match$ on $(B, A, M)$.

\begin{algorithm}[H]
\label{algo:match}
\caption{The $\match$ subroutine }\label{process2}
\begin{algorithmic} 
 \Function{$\match$}{Bids $B$, Asks $A$, Matching $M$}
 \Comment{Initially, $M=\emptyset$.}

\If {$|B|=0$ or $|A|=0$}
\State \Return{M}
\EndIf
 \State
\State $b\leftarrow \mathsf{pop}(B)$
\State $a\leftarrow \mathsf{pop}(A)$
\State

\If {$\pr(b)<\pr(a)$}
\State $\mathsf{push}(B, b)$
\State \Return $\match(B, A, M)$
\EndIf
\Comment{otherwise, b and a are tradable}

\State $q\leftarrow \min\{\q(a),\q(b)\}$
\State
\State $\mathsf{push}(M, \{(\id(b),\id(a), q, \pr(a))\})$
\State
\If {$\q(b) - q > 0$}
\State $\mathsf{push}(B,(\id(b),\ts(b),\q(b)-q,\pr(b)))$
\EndIf

\If {$\q(a) - q > 0$}
\State $\mathsf{push}(A,(\id(a),\ts(a),\q(a)-q,\pr(a)))$
\EndIf

\State
\State \Return $\match(B,A,M)$
 \EndFunction    
 \end{algorithmic} 
 \end{algorithm}
 
Clearly $\UM$ takes $O(n \log n)$ time,  whereas $\match$ takes $O(n)$ time, where $n=|A|+|B|$.

The algorithm can be simply described as first sorting the lists $B$ and $A$ based on competitiveness, and then doing a top-down greedy matching as long as the current bid and ask are tradable. 

The following result states that the above result is correct.

\begin{theorem}[Theorem 20 of~\cite{RSS21}] $\UM(B,A)$ outputs a uniform price matching.
\label{thm:UMcorrect}
\end{theorem}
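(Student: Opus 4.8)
The plan is to check that the object returned by $\UM(B,A)$, together with the common price assigned in the final step, meets all the requirements of a solution to Problem~2: it is a valid matching over $(B,A)$, it is fair, it can be made uniform, and it has the largest volume among all uniform matchings. The first three properties fall out of the structure of $\match$ almost immediately; the optimality of the volume is the real content and the main obstacle.

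\emph{Validity and fairness.} A transaction is inserted only when the current top bid $b$ and top ask $a$ are tradable, with quantity $\min\{\q(b),\q(a)\}$, and the residual quantity of whichever order is not exhausted is pushed back; so the three defining conditions of a matching are preserved as loop invariants. For fairness, the crucial observation is that $\match$ pops orders in decreasing order of competitiveness and never advances past an order until it is fully exhausted (a partially filled order is pushed back and stays on top). Hence, if a bid $b_2$ participates in $M=\UM(B,A)$, then every bid $b_1\succ b_2$ was popped and completely traded before $b_2$ was ever touched, so $\Q(b_1,M)=\q(b_1)$; the identical argument on the ask list yields fairness on asks.

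\emph{Uniformity.} Let $(b^\ast,a^\ast)$ be the last matched pair. Since the matched bids are exactly the most competitive bids and the matched asks the most competitive asks, $b^\ast$ has the smallest limit price among participating bids and $a^\ast$ the largest among participating asks, while tradability of this last pair gives $\pr(a^\ast)\le\pr(b^\ast)$. I would then choose any common price $p^\ast\in[\pr(a^\ast),\pr(b^\ast)]$: for every matched pair $(b,a)$ one has $\pr(a)\le\pr(a^\ast)\le p^\ast\le\pr(b^\ast)\le\pr(b)$, so $p^\ast$ is simultaneously a legal transaction price for all transactions, making $M$ uniform.

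\emph{Maximum volume (the main obstacle).} Write $V=\vol(M)$. First note that for any uniform matching $M'$ at price $p$, every participating bid lies in $B_{\ge p}$ and every participating ask in $A_{\le p}$, so $\vol(M')\le\min(\vol(B_{\ge p}),\vol(A_{\le p}))$. It therefore suffices to prove $\min(\vol(B_{\ge p}),\vol(A_{\le p}))\le V$ for every $p$, and I would extract this from how $\match$ halts. If it halts because a list empties, then $V=\vol(A)$ or $V=\vol(B)$ and the bound is trivial. Otherwise it halts on a non-tradable top pair $(b,a)$ with $\pr(b)<\pr(a)$; here the matched bids are the most competitive bids of total volume $V$ and include every bid with $\pr>\pr(b)$, while the matched asks include every ask with $\pr<\pr(a)$. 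For a fixed $p$, if $p>\pr(b)$ then $B_{\ge p}$ consists entirely of fully matched bids, whence $\vol(B_{\ge p})\le V$; and if $p\le\pr(b)<\pr(a)$ then $A_{\le p}$ consists entirely of fully matched asks, whence $\vol(A_{\le p})\le V$. Either way $\min(\vol(B_{\ge p}),\vol(A_{\le p}))\le V$. Since $M$ is itself a uniform matching of volume $V$, this gives $V=\max_p\min(\vol(B_{\ge p}),\vol(A_{\le p}))$, so $M$ attains the maximum uniform volume; because $M$ is also fair and, by the remark following \cref{thm:fairness}, the largest fair uniform volume equals the largest uniform volume, $M$ solves Problem~2. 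The delicate part throughout is the handling of ties, where several orders share a limit price and the boundary order is only partially filled; anchoring the case split at the \emph{strict} comparisons $p>\pr(b)$ versus $p\le\pr(b)$, which are exhaustive precisely because $\pr(b)<\pr(a)$, is what keeps this clean and avoids any appeal to equality cases.
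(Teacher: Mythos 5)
Your proposal is correct, and while the validity, fairness, and uniformity parts coincide with the paper's sketch, your maximum-volume argument takes a genuinely different route. The paper (following~\cite{RSS21}) proves optimality by an exchange argument: fix an optimal matching $\OPT$, repeatedly reroute single units among $b,a,b',a'$ to force the transaction quantity between the most competitive bid--ask pair in $\OPT$ to agree with that in $\UM(B,A)$ without changing $\vol(\OPT)$, then peel off that transaction and recurse until $\OPT=M$. You instead argue by weak duality: every uniform matching at price $p$ has volume at most $\min(\vol(B_{\geq p}),\vol(A_{\leq p}))$, and by examining how $\match$ halts you exhibit, for every $p$, one side of the market ($B_{\geq p}$ or $A_{\leq p}$) that is entirely absorbed by the greedy output, certifying $\min(\vol(B_{\geq p}),\vol(A_{\leq p}))\leq\vol(M)$. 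Your version is shorter, yields the classical equilibrium-price characterization $\vol(M)=\max_p\min(\vol(B_{\geq p}),\vol(A_{\leq p}))$ as a byproduct, and your anchoring of the case split on the strict inequality $\pr(b)<\pr(a)$ is exactly what makes the tie cases go through. The paper's exchange argument is heavier but gives finer, transaction-level control --- it shows $\OPT$ can be morphed into $M$ order by order, which is the form of statement the paper actually needs downstream (the per-order equality $\Q(w,\OPT)=\Q(w,M)$ in Theorem~\ref{thm:UMiscorrect}, and the fairness-preserving transformations behind Theorem~\ref{thm:fairness}). Two minor remarks: your bound $\vol(M')\leq\min(\vol(B_{\geq p}),\vol(A_{\leq p}))$ uses that the common price lies in $[\pr(a),\pr(b)]$ for every matched pair, which appears in the paper's informal description of transactions rather than in the formal definition of a matching (it is clearly the intended reading, so this is not a gap); and your closing appeal to the remark after Theorem~\ref{thm:fairness} is unnecessary, since $M$ is itself fair and uniform and dominates all uniform matchings in volume.
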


We give a brief intuition of the proof of the above result. Let $M= \UM(B,A)$. To see that $M$ is a fair matching is trivial, as it starts matching the bids and asks in the decreasing order of competitiveness. Similarly, it is easy to convince oneself that $M$ has a uniform price; any price between the limit prices of the last paired bid and ask is acceptable to all orders that participate in the matching; this again follows from the fact that matching is done top down in the order of decreasing competitiveness. Finally, to see that $M$ has maximum volume needs some work: fix an optimal matching $\OPT$. We can gradually transform $\OPT$ into $M$ without altering its volume. The illuminating case to consider is when the most competitive orders $b$ and $a$ are fully traded in $M$. Note that the transaction quantity between $b$ and $a$ in $M$ is $q= \min\{\q(b), \q(a)\}$. Since $\OPT$ is fair, $b$ and $a$ must also be fully traded in $\OPT$.  In particular $\Q(b,\OPT) \geq q$ and $\Q(a,\OPT) \geq q$.  Now, if the transaction quantity between $b$ and $a$ in $\OPT$ is strictly less than $q$, then $\OPT$ can be modified by making the transaction quantity between $b$ and $a$ equal to $q$. Let us say $b$ is matched with $a'$ and $a$ is matched with $b'$ in $\OPT$, then we can reduce the quantity of these transactions by a unit quantity, increase the transaction quantity between $b$ and $a$ by unit quantity, and increase the transaction quantity between $b'$ and $a'$ by a unit quantity (this is possible since every matched bid is tradable with every matched ask as all the transaction prices are identical). Note that in the end, $\vol(\OPT)$ remains the same. We can repeat this surgery over and over again till the matched quantity between $b$ and $a$ in $\OPT$ becomes equal to $q$. We remove this transaction from both $\OPT$ and $M$ and apply the same argument repeatedly till $M = \OPT$.

\subsection{Useful lemmas and subroutines}

Before we proceed to describe our improved algorithm, we establish certain lemmas and subroutines that will be useful in describing our algorithm and proving its correctness.

We first need some definitions.

\begin{definition}[$\spl$, $\range$]
For a set of bids or a set of asks $\Omega$ and an element $\omega\in\Omega$, $\spl(\Omega, \omega)$ returns a partition of $\Omega = (\Omega_\succeq, \Omega_\prec)$, where $\Omega_\succeq =\{x\in \Omega \mid x\succeq \omega\}$ and $\Omega_\prec=\{x\in \Omega \mid x \prec \omega\}$. Thus, $\spl$ splits $\Omega$ into two parts, one containing orders that are at least as competitive as $\omega$, and the other containing orders that are less competitive than $\omega$. Clearly, $\spl$ can be implemented in linear time.

We now define the range of an order in a set of orders $\Omega$ which are all bids or all asks.
Let $\omega$ be an order in $\Omega$. Let $\spl(\Omega,\omega) = (\Omega_\succeq, \Omega_\prec)$. Let $\Omega_\succ = \Omega_\succeq \setminus \{\omega\}$. $\range_\Omega(\omega) = \{x\in\mathbb{N} \mid \vol(\Omega_\succ) < x \leq \vol(\Omega_\succeq)\}$. If all orders in $\Omega$ have unit quantities, then the $\range_\Omega$ of the $i^\text{th}$ most competitive order is the singleton $\{i\}$. In general, range of the most competitive order $\omega_1$ is the set $\{1,\cdots,\q(\omega_1)\}$. The range of the next most competitive order $\omega_2$ is $\{\q(\omega_1)+1,\cdots, \q(\omega_1) + \q(\omega_2)\}$, and so on. Thus, the ranges of orders in $\Omega$ partition $[\vol(\Omega)]$.
\end{definition}

Observe the following fact about the previous algorithm $\UM$.
\begin{lemma}
    If a bid $b\in B$ and an ask $a\in A$ are matched in the matching output by $\UM(B,A)$, then $\range_B(b)\cap\range_A(a) \neq \emptyset$.
\label{lem:1}
\end{lemma}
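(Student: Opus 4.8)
```latex
\textbf{Proof plan for \cref{lem:1}.}
The plan is to track, for the bid $b$ and ask $a$ matched by $\UM$, precisely when they get paired during the top-down sweep, and to relate that moment to the cumulative volumes that define their ranges. Recall that $\range_B(b) = \{x \mid \vol(B_{\succ b}) < x \leq \vol(B_{\succeq b})\}$, where $B_{\succ b}$ is the set of bids strictly more competitive than $b$, and similarly for $\range_A(a)$. So the key quantities to control are $\vol(B_{\succ b})$, $\vol(B_{\succeq b})$, $\vol(A_{\succ a})$, and $\vol(A_{\succeq a})$.

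First I would set up notation for the ``running matched volume'' $V$ at the point in the execution of $\match$ when the transaction between $b$ and $a$ is created. The central observation is that $\UM$ processes bids and asks in strictly decreasing order of competitiveness and never skips a more competitive order while matching a less competitive one (the only non-matching branch pushes a bid back and discards nothing, so incompatibility only arises at termination). Hence at the moment $b$ and $a$ are paired, every bid strictly more competitive than $b$ has already been \emph{fully} traded, and likewise every ask strictly more competitive than $a$ has been fully traded. This is exactly the fairness property of $M = \UM(B,A)$ specialized to the orders appearing before $b$ and $a$. From this I would argue that the total volume matched \emph{before} the $(b,a)$ transaction equals both $\vol(B_{\succ b})$-plus-the-already-consumed-part-of-$b$ and $\vol(A_{\succ a})$-plus-the-already-consumed-part-of-$a$; more cleanly, the transaction between $b$ and $a$ covers an interval of matched volume that lies inside $(\vol(B_{\succ b}), \vol(B_{\succeq b})]$ on the bid side and inside $(\vol(A_{\succ a}), \vol(A_{\succeq a})]$ on the ask side.

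The cleanest way to make this precise is to assign to each unit of traded volume an index $1, 2, 3, \ldots$ in the order the transactions are created, and to show that the units contributed by $b$ form a contiguous block of indices contained in $\range_B(b)$, and the units contributed by $a$ form a contiguous block contained in $\range_A(a)$. For the bid side: since all bids more competitive than $b$ are fully matched strictly before any unit of $b$ is matched, the indices of $b$'s units all exceed $\vol(B_{\succ b})$; and since $b$ is matched for at most $\q(b)$ units, these indices are at most $\vol(B_{\succ b}) + \q(b) = \vol(B_{\succeq b})$. Thus the index set of $b$'s units is contained in $\range_B(b)$, and symmetrically the index set of $a$'s units is contained in $\range_A(a)$. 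But the transaction between $b$ and $a$ contributes at least one unit of volume (its quantity is $\min\{\q(a),\q(b)\} \geq 1$), and that unit carries a single common index lying in both index sets. Hence $\range_B(b) \cap \range_A(a) \neq \emptyset$.

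The main obstacle I anticipate is establishing rigorously the ``contiguity and full-prefix-matching'' claim, i.e.\ that at the instant $b$ meets $a$ the already-matched volume is exactly $\vol(B_{\succ b})$ on the bid side and exactly $\vol(A_{\succ a})$ on the ask side. This requires a careful induction on the recursive calls to $\match$, using the invariant that after each matching step the set of fully consumed bids is exactly a prefix of $B$ in competitiveness order (and likewise for $A$), together with the fact that $\vol(A)=\vol(B)$ ensures the sweep does not terminate early before reaching $b$ and $a$. Once that invariant is in hand, the index-assignment argument is routine bookkeeping.
```
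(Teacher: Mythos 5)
Your proposal is correct and takes essentially the same route as the paper: the paper's proof likewise observes that whenever $\match$ is about to pair the current most competitive bid $b$ with the current most competitive ask $a$, the already-matched volume $\vol(\hat M)$ satisfies $\vol(\hat M)+1\in\range_B(b)\cap\range_A(a)$, which is precisely your ``index of the first unit of the $(b,a)$ transaction'' bookkeeping. (One cosmetic point: in the pseudocode the incompatible branch discards the popped ask rather than ``nothing,'' but since no further transactions can occur once the tops are incompatible, this does not affect your invariant.)
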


This is trivial to see when all orders in $B\cup A$ have unit quantities since $\UM$ matches the $i^\text{th}$ most competitive bid with the $i^\text{th}$ most competitive ask for all $i \le \vol(M)$, where $M = \UM(B,A)$, and the $\range$ of the $i^\text{th}$ most competitive bid (and ask) is the singleton $\{i\}$. To see why the general statement is true observe that whenever $\match(\hat B,\hat A,\hat M)$ is called it potentially matches the most competitive bid $b \in \hat B$ with the most competitive ask $a \in \hat A$, and $\vol(\hat M) + 1 \in \range_B(b) \cap \range_A(a)$.

Our main workhorse is the $\select$ subroutine which works in linear time by employing the classical algorithm of~\cite{BFPRT73}. Given a list of orders (all bids or all asks) $\Omega$ and a number $t \leq |\Omega|$, $\select(\Omega, t)$ outputs the $t^{\text{th}}$ most competitive order in $\Omega$.

We also use a weighted version of $\select$ called $\selectq$ which takes as input $\Omega$ and a quantity $q \leq \vol(\Omega)$ and outputs the unique element  $\omega \in \Omega$ such that $q\in \range_\Omega(\omega)$.
$\selectq$ can be implemented in linear time using the $\select$ algorithm as subroutine as described below.

\begin{algorithm}[H]
 \caption{Weighted Selection where weights are quantities}\label{process3}
\begin{algorithmic}  
\Function{$\mathsf{SelectQ}$}{Orders $\Omega$, $q$}
\State $\omega= \mathsf{Select}(\Omega, \lceil \frac {|\Omega|}{2} \rceil)$ 
\State $(\Omega_\succeq,\Omega_\prec) \leftarrow \mathsf{Split}(\Omega, \omega)$
\State 
\If {$\vol(\Omega_\succeq) - \q(\omega) < q \le \vol(\Omega_\succeq)$}
 \State \Return $\omega$ 
 \EndIf

 \If {$q \leq \vol(\Omega_\succeq) - \q(\omega) $}
    \State \Return $\mathsf{SelectQ}(\Omega_\succeq, q)$
    \EndIf
 \If {$q > \vol(\Omega_\succeq) $}
\State \Return$\mathsf{SelectQ}(\Omega_\prec,q - \vol(\Omega_\succeq))$
\EndIf
 \EndFunction    
 \end{algorithmic} 
 
\end{algorithm}

We get the following recurrence relationship on the running time of $\selectq$: $T(n) \leq T(\lceil n/2 \rceil + 1) + O(n)$, which yields $T(n) = O(n)$, where $n = |\Omega|$.

The next subroutine we consider is $\spl(\Omega,\omega)$, which partitions $\Omega$ into two parts consisting of orders in $\Omega$ which are at least as competitive as $\omega$, and orders which are strictly less competitive than $\omega$. We now want to define a subroutine $\splq$ which takes as input a list of orders $\Omega$ and a quantity $q\leq \vol(\Omega)$. We want to `partition' $\Omega$ into two parts so that the volume of the more competitive part is precisely $q$. 
For this, we first find the element $\omega\in\Omega$ such that $q\in\range_\Omega(\omega)$. We then $\spl(\Omega,\omega)$ to obtain $(\Omega_\succeq,\Omega_\prec)$. If $\vol(\Omega_\succeq) = q$, then we are immediately done. Else $\vol(\Omega_\succeq) > q$ and $\vol(\Omega_\succeq \setminus \{\omega\}) < q$. Thus we must break a part of $\omega$ and put it in $\Omega_\prec$ so that $\vol(\Omega_\succeq)$ is precisely $q$. This can be achieved in linear time by the following subroutine $\splq(\Omega,q)$. Apart from outputting the partition as described above, it also returns the order $\omega$.  $\splq$ clearly runs in linear time.

\begin{algorithm}[H]
\caption{Splitting an order list by a particular quantity.}\label{process4}
\begin{algorithmic} 
\Function{$\mathsf{SplitQ}$}{Orders $\Omega$, quantity $q$}\Comment{Promise: $q\le \vol(\Omega)$}
\State $w= \mathsf{SelectQ}(\Omega, q)$
\State  $(\Omega_\succeq,\Omega_\prec) \leftarrow  \mathsf{Split}(\Omega, \omega)$
\State
\State $q_\text{extra} = \vol(\Omega_\succeq) - q$
\State
\If {$q_\text{extra} > 0$}
\State $\Omega_\succeq \leftarrow [\Omega_\succeq \setminus\{\omega\} ]\cup \{(\id(\omega),\ts(\omega),\q(\omega)-q_\text{extra},\pr(\omega))\}$
\State $\Omega_\preceq \leftarrow \Omega_\prec \cup \{(\id(\omega),\ts(\omega),q_\text{extra},\pr(\omega))\}$
\EndIf
\State \Return($\omega$, $\Omega_\succeq$, $\Omega_\preceq$)
 \EndFunction    
 \end{algorithmic} 
 
\end{algorithm}

We are now ready to state another lemma regarding the previous algorithm $\UM$.

\begin{lemma}
    Let $\hat b\in B$ be the $t^\text{th}$ most competitive bid that gets completely traded in $M=\UM(B,A)$. Let $\spl(B, \hat b) = (B_\succeq, B_\prec)$ and $q = \vol (B_\succeq)$. Let $\splq(A, q) = (\hat a, A_\succeq, A_\preceq)$. Then, $M$ can be partitioned into $(M_1, M_2)$ such that $M_1 = \UM(B_\succeq, A_\succeq)$ and $M_2= \UM(B_\prec, A_\preceq)$.
    \label{lem:2}
\end{lemma}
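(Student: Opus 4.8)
The plan is to analyze the execution of $\match(B,A,\emptyset)$ (recall that $\UM$ merely sorts the two lists and then calls $\match$) and to show that it splits into two consecutive phases: a first phase that exactly consumes $B_\succeq$ together with the $q$ most competitive units of asks, and a second phase that processes what remains. The engine driving this is fairness of $M=\UM(B,A)$: since $\hat b$ is completely traded, fairness on bids forces every bid more competitive than $\hat b$ to be completely traded as well, so \emph{all} bids of $B_\succeq$ are completely traded in $M$. Consequently the total volume traded by the $B_\succeq$-bids equals $\vol(B_\succeq)=q$, which in particular gives $q\le\vol(A)$ so that $\splq(A,q)$ is well defined.

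First I would record the key monotonicity of $\match$: at every step it pops the currently most competitive remaining bid and the most competitive remaining ask, and each transaction reduces both sides by the same amount. Hence after $k$ total units have been traded, the units consumed are precisely the $k$ most competitive units of bids and the $k$ most competitive units of asks (this is the unit-level form of \cref{lem:1}). I would then argue that the first $q$ units traded by $\match(B,A,\emptyset)$ are exactly the units of $B_\succeq$ on the bid side and the units of $A_\succeq$ on the ask side. Two points need checking. (i) The first phase does not stop early: were $\match$ to pop a non-tradable top pair before exhausting $B_\succeq$, it would terminate leaving some bid of $B_\succeq$ not completely traded, contradicting that all of $B_\succeq$ is completely traded in $M$; thus every pair matched in the first phase is tradable. (ii) The phase ends exactly at volume $q$: when the least competitive bid of $B_\succeq$ (namely $\hat b$) is finally exhausted, the cumulative traded volume is exactly $\vol(B_\succeq)=q$, at which moment precisely the top $q$ units of asks have been consumed, i.e. exactly $A_\succeq$, with the partially consumed ask being $\hat a$ and its leftover quantity $q_\text{extra}$ forming the $A_\preceq$-copy of $\hat a$ --- which is exactly the split produced by $\splq(A,q)$.

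Next I would exploit the fact that $\match$ never \emph{reads} its matching argument; it only pushes new transactions onto it. Thus for any lists $B',A'$ and any matching $M_0$ we have $\match(B',A',M_0)=M_0\,\cup\,\match(B',A',\emptyset)$, the newly appended transactions being independent of $M_0$. Let $M_1$ be the set of transactions created during the first phase. By the previous paragraph, the state of $\match(B,A,\emptyset)$ at the end of the first phase is exactly the bid list $B_\prec$, the ask list $A_\preceq$, and the accumulated matching $M_1$; hence the remaining computation is $\match(B_\prec,A_\preceq,M_1)=M_1\cup\match(B_\prec,A_\preceq,\emptyset)$, and its output is $M$. Since the sublists $B_\prec$ and $A_\preceq$ are already sorted by competitiveness, $\match(B_\prec,A_\preceq,\emptyset)=\UM(B_\prec,A_\preceq)$, giving $M_2:=M\setminus M_1=\UM(B_\prec,A_\preceq)$. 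Symmetrically, the first phase of $\match(B,A,\emptyset)$ makes the identical sequence of tradability decisions and transactions as a fresh run $\match(B_\succeq,A_\succeq,\emptyset)$ --- the only difference, that $\hat a$ carries its full quantity in the former but only its $A_\succeq$-portion $\q(\hat a)-q_\text{extra}$ in the latter, is immaterial because the first phase consumes exactly that portion of $\hat a$ and then $B_\succeq,A_\succeq$ (of equal volume $q$) are emptied together. As $B_\succeq,A_\succeq$ are sorted, $M_1=\UM(B_\succeq,A_\succeq)$. Together these yield the claimed partition.

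I expect the main obstacle to be the clean bookkeeping at the phase boundary, and in particular the treatment of the pivot ask $\hat a$, which is shared: part of it is traded in the first phase and part in the second. I would make sure that (a) the boundary falls exactly at traded volume $q$ on both sides, so that no single transaction straddles the two phases, and (b) the split of $\hat a$ into its $A_\succeq$- and $A_\preceq$-copies by $\splq$ matches, as identical $(\id,\ts,\q,\pr)$-tuples and hence as identical transactions $(\id(b),\id(\hat a),\cdot,\pr(\hat a))$, the portions of $\hat a$ consumed in each phase. Establishing these two facts is what makes the equalities $M_1=\UM(B_\succeq,A_\succeq)$ and $M_2=\UM(B_\prec,A_\preceq)$ hold exactly, rather than merely up to volume.
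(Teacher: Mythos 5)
Your proof is correct and follows essentially the same route as the paper, which disposes of this lemma in a three-sentence sketch (all of $B_\succeq$ is fully traded against the top $q$ units of asks, i.e.\ $A_\succeq$, and the remainder of the run is $\UM$ on what is left); you simply supply the details the paper omits, in particular the careful treatment of the pivot ask $\hat a$ and the phase boundary.
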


The above statement is easy to check. Since $\hat b$ gets completely traded in $M$, all orders in $B_\succeq$ must get traded with the most competitive asks in $A$ whose total quantity is $q$, which is precisely the set $A_\succeq$. Let $M_1$ consist of the transactions produced while matching orders in $B_\succeq$. Then $M_2$ will be obtained by running the $\UM$ on $(B_\prec,A_\succeq)$.

Note that a symmetric statement holds where we start with the assumption that the $t^\text{th}$ most competitive ask gets traded in $M$.

We are now ready to describe our improved algorithm.
\subsection{Improved algorithm}

Our improved algorithm $\UMs$ takes as input a list of bids $B$ and a list of asks $A$. It immediately invokes $\UMb$ on $(B,A,M=\emptyset)$. $M$ will grow into the final matching output by the algorithm.

\begin{algorithm}[H]
 \caption{Efficient Uniform Algorithm }\label{process5}
\begin{algorithmic} 
\Function{$\UMs$}{Bids $B$, Asks $A$}

\State $\UMb(B, A, \emptyset)$
 \EndFunction    
 \end{algorithmic} 
 
\end{algorithm}

There are two main subroutines of our algorithm $\UMb$ and $\UMa$ which are symmetric in nature and they alternatively call each other. This is done to ensure that in two successive return calls the problem size, i.e., $|B|+|A|$ reduces by a factor of two, as $\UMb$ halves the number of the bids, whereas $\UMa$ halves the number of the asks. So just understanding $\UMb$ will be sufficient for understanding our algorithm.

On receiving $(B,A,M)$ $\UMb$ first finds the median bid $b$ by invoking $\mathsf{Select}(B, \lceil\frac{B}{2} \rceil)$. It then splits $B$ into two halves $(B_\succeq, B_\prec)$ based on the median bid $b$ by invoking $\spl(B,b)$. Let $q = \vol(B_\succeq)$. It then finds the element $a\in A$ such that $q\in \range_A(a)$ and splits $A$ into its most competitive and least competitive asks $(A_\succeq, A_\preceq)$ such that $\vol(A_\succeq) = q$ by applying $\splq(A,q)$.

After that, the algorithm checks if $b$ and $a$ are tradable. Two cases arise.
If $b$ and $a$ are tradable, then every order in $B_\succeq$ and $A_\succeq$ are tradable and they will exhaustively be matched to each other to produce a matching of volume $q$ in linear time using the $\match$ subroutine. $\UMb$ adds this $\vol$ $q$ matching to $M$ and recursively calls $\UMa$ on $(B_\prec, A_\preceq, M)$.
Intuitively, the previous algorithm would also match all orders in $B_\succeq$ and $A_\succeq$ exhaustively to each other to produce a matching of size $q$ and proceed to matching orders in $B_\prec$ and $A_\preceq$.

In the case $b$ and $a$ are not tradable, then $\UMb$ discards $(B_\prec, A_\preceq)$ and calls $\UMa$ on $(B_\succeq, A_\succeq, M)$. Intuitively, the previous algorithm will halt (i.e., produce its last transaction) before it even comes down to examining orders in $B_\succeq$ and $A_\succeq$.

 \begin{algorithm}[H]
 \caption{Uniform Matching by bisecting  bids}\label{process6}
\begin{algorithmic}
\Function{$\UMb$}{Bids $B$, Asks $A$, Matching $M$}

\If {$|B| = 0$ or $|A| = 0$ or ($B = \{b\}$ and $A=\{a\}$ and $\pr(a) > \pr(b)$)}
\State\Return{$M$}
\EndIf
\State
\State $b= \mathsf{Select}(B, \lceil \frac {|B|}{2} \rceil)$

\State $(B_\succeq,B_\prec) \leftarrow \mathsf{Split}(B,b)$
\State

\State $(a,A_\succeq,A_\preceq) 
\leftarrow 
\mathsf{SplitQ}(A, \vol(B_\succeq))$
\State

\If {$\pr(a) \leq \pr(b)$}
\State 
$M \leftarrow M \cup \mathsf{Match}(B_\succeq,A_\succeq)$
\State \Return $\UMa(B_\prec,A_\preceq, M)$
\EndIf
\Comment{Otherwise, $b$ and $a$ are not tradable}
\State\Return $\UMa(B_\succeq,A_\succeq, M)$
 \EndFunction    
 \end{algorithmic} 
\end{algorithm}

 \begin{algorithm}[H]
 \caption{Uniform Matching by bisecting asks}\label{process7}
\begin{algorithmic}
\Function{$\UMa$}{Bids $B$, Asks $A$, Matching $M$}

\If {$|B| = 0$ or $|A| = 0$ or ($B = \{b\}$ and $A=\{a\}$ and $\pr(a) > \pr(b)$)}
\State\Return{$M$}
\EndIf
\State
\State $a= \mathsf{Select}(A, \lceil \frac {|A|}{2} \rceil)$

\State $(A_\succeq, A_\prec) \leftarrow \mathsf{Split}(A,a)$
\State

\State $(b, B_\succeq,B_\preceq) 
\leftarrow 
\mathsf{SplitQ}(B, \vol(A_\succeq))$
\State

\If {$\pr(a) \leq \pr(b)$}
\State 
$M \leftarrow M \cup \mathsf{Match}(B_\succeq,A_\succeq)$
\State \Return $\UMb(B_\preceq,A_\prec, M)$
\EndIf
\Comment{Otherwise, $b$ and $a$ are not tradable}
\State\Return $\UMb(B_\succeq,A_\succeq, M)$
 \EndFunction    
 \end{algorithmic} 
\end{algorithm}

Having described our algorithm $\UMs$, we now turn to prove its correctness. The main theorem that establishes the correctness is as follows.

\begin{theorem}\label{thm:UMiscorrect}
    Given a list of bids $B$ and a list of asks $A$, let $\OPT = \UM(B,A)$ and $M = \UMs(B,A)$. Then, for each order $w\in B\cup A$, $\Q(w, \OPT) = \Q(w, M)$.
\end{theorem}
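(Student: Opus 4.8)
The plan is to prove the statement by strong induction on $n=|B|+|A|$, establishing a single invariant that covers both mutually recursive subroutines at once. Concretely, I would show that for every input $(B,A,M)$, where $M$ is a matching on an order set disjoint from $B\cup A$, the call $\UMb(B,A,M)$ returns a matching $M'$ with $\Q(w,M')=\Q(w,M)+\Q(w,\UM(B,A))$ for every order $w$, and symmetrically for $\UMa$. Taking $M=\emptyset$ then yields the theorem. Throughout, I would record the invariant $\vol(B)=\vol(A)$, which holds at the top level (by the dummy padding) and is preserved by every recursive call, since each branch either matches two sublists of equal volume $q$ and recurses on the equal-volume remainders, or recurses on $(B_\succeq,A_\succeq)$ with $\vol(B_\succeq)=\vol(A_\succeq)=q$. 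The base cases are immediate: when a list is empty or the only surviving bid/ask pair is non-tradable, both $\UMb$ and $\UM$ add nothing.

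The tradable case is where \cref{lem:2} does the work. Let $b$ be the median bid, $q=\vol(B_\succeq)$, and $(a,A_\succeq,A_\preceq)=\splq(A,q)$, with $\pr(a)\le\pr(b)$. Since $b$ is the least competitive order of $B_\succeq$ and the retained part of $a$ is the least competitive order of $A_\succeq$, every bid of $B_\succeq$ has price $\ge\pr(b)\ge\pr(a)\ge$ the price of every ask of $A_\succeq$; hence all of $B_\succeq$ and $A_\succeq$ are pairwise tradable. As these two sets have equal volume $q$ and occupy the top $q$ units of $B$ and of $A$, the greedy run $\UM(B,A)$ trades all of them without interruption, so $b$ is completely traded and is in fact the $\lceil|B|/2\rceil$-th most competitive completely traded bid (fairness forces the completely traded bids to form a prefix in competitiveness order). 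I can therefore invoke \cref{lem:2} with $\hat b=b$: it splits $\UM(B,A)$ into $\UM(B_\succeq,A_\succeq)\uplus\UM(B_\prec,A_\preceq)$ using exactly the partition computed by $\UMb$. The first part trades everything fully, so its per-order quantities coincide with those of $\match(B_\succeq,A_\succeq)$; the second part is handled by the induction hypothesis applied to the recursive call $\UMa(B_\prec,A_\preceq,\cdot)$. Summing the per-order contributions, taking care that the split order $a$ is counted across both its $A_\succeq$- and $A_\preceq$-parts (which share $a$'s id), gives the desired equality.

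In the non-tradable case ($\pr(a)>\pr(b)$) I would argue that $\UM(B,A)$ and $\UM(B_\succeq,A_\succeq)$ produce the identical matching. The key observation is that $\match$ maintains a balanced-consumption invariant: every transaction decrements a bid and an ask by the same amount, so at every moment the total bid volume and total ask volume consumed are equal. Now $q\in\range_A(a)$ and $q$ is the top of $\range_B(b)$, so the $q$-th bid unit lies in $b$ and the $q$-th ask unit lies in $a$; by balanced consumption these two units would have to be traded against each other, which is impossible as $b$ and $a$ are not tradable. Hence $\UM(B,A)$ halts having consumed strictly fewer than $q$ units, so it never touches $B_\prec$ or $A_\preceq$ and its execution coincides step for step with $\UM(B_\succeq,A_\succeq)$ (the surviving part of $a$ is only ever traded on positions $<q$ in both runs, so the difference in $a$'s total quantity is irrelevant). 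Thus $\Q(\cdot,\UM(B,A))=\Q(\cdot,\UM(B_\succeq,A_\succeq))$, and the induction hypothesis on the recursive call $\UMa(B_\succeq,A_\succeq,\cdot)$ closes the case. The four cases for $\UMa$ are verbatim symmetric, using the ask-analogue of \cref{lem:2}.

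The main obstacle is not any single case but the well-foundedness of the induction caused by the $\UMb$/$\UMa$ alternation: a single recursive call need not decrease $|B|+|A|$ (for instance when $|B|=1$ and $q=\vol(A)$, so $A_\succeq=A$ and $\UMb$ hands $(B,A)$ unchanged to $\UMa$). Here the invariant $\vol(B)=\vol(A)$ is what keeps this under control: it forces such a non-decreasing step to occur only when $|B|=1$ (respectively $|A|=1$ for $\UMa$), and one checks that the very next call then either hits a base case or strictly decreases $|B|+|A|$ (because $\UMa$ halves $A$). I would therefore phrase the induction so that in this degenerate situation the recursive call is unfolded one extra level before the hypothesis is applied, or equivalently use the lexicographic measure $(|B|+|A|,\,\text{caller type})$; establishing that no two consecutive calls both fail to decrease the size is the one place demanding genuine care. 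A secondary, purely bookkeeping, subtlety is that $\splq$ can split the boundary order into two records sharing an id, so all per-order quantity accounting must be read as summing over the parts of each id, exactly the convention already implicit in \cref{lem:2}.
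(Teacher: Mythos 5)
Your proposal is correct and follows essentially the same route as the paper's proof: induction on $|B|+|A|$ with the invariant $\vol(B)=\vol(A)$, \cref{lem:2} for the tradable case, and an argument in the non-tradable case that is equivalent to the paper's use of \cref{lem:1} (your ``balanced-consumption'' observation is exactly the statement that matched orders have intersecting ranges). Your lexicographic handling of the $\UMb$/$\UMa$ alternation when $|B|=1$ corresponds to the paper's device of proving the claim for $\UMb$ and $\UMa$ simultaneously and deferring that degenerate case to the symmetric half.
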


Once we prove the above theorem, the correctness follows from the following proposition and Theorem~\ref{thm:UMcorrect}.

\begin{proposition}
    If $M_1$ is a uniform price matching over $(B,A)$ and $M_2$ is a matching over $(B,A)$ such that for all $w\in B\cup A$, $\Q(w,M_1) = \Q(w, M_2)$, then $M_2$ is a uniform price matching.
\end{proposition}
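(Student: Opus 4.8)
The plan is to produce a single price that is simultaneously feasible for every transaction of $M_2$, namely the common price that already witnesses the uniformity of $M_1$. The whole argument rests on realizing that uniformity is governed only by \emph{which} orders participate, not by how the traded quantity is distributed among bid--ask pairs.

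First I would record the one structural fact driving everything. Since $\Q(w,M_1)=\Q(w,M_2)$ for every order $w\in B\cup A$, an order participates in $M_1$ if and only if it participates in $M_2$, because participation is exactly the condition $\Q(w,\cdot)\ge 1$. Hence $M_1$ and $M_2$ have identical sets of participating bids, say $B^*=\{b\in B : \Q(b,M_1)\ge 1\}$, and identical sets of participating asks, say $A^*=\{a\in A : \Q(a,M_1)\ge 1\}$.

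Next I would extract the common price. Let $p^*$ be the uniform transaction price of $M_1$. Every transaction of $M_1$ is valid, so a transaction between a bid $b$ and an ask $a$ satisfies $p^*\in[\pr(a),\pr(b)]$. Ranging over all transactions of $M_1$, and using that each $b\in B^*$ and each $a\in A^*$ occurs in at least one such transaction, yields $\pr(a)\le p^*$ for all $a\in A^*$ and $p^*\le\pr(b)$ for all $b\in B^*$; that is, $p^*$ sits above every participating ask's limit price and below every participating bid's limit price.

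Finally I would transfer this to $M_2$. Consider any transaction of $M_2$, say between $b$ and $a$; since it carries positive quantity, $b$ and $a$ participate in $M_2$ and hence lie in $B^*$ and $A^*$, so the previous step gives $\pr(a)\le p^*\le\pr(b)$, i.e.\ $p^*$ is a feasible price for this transaction. As this holds for every transaction of $M_2$, assigning the common price $p^*$ to all of them keeps $M_2$ a valid matching whose transactions occur at a single price, so $M_2$ is a uniform price matching (the case $M_2=\emptyset$ being uniform vacuously). The only real content, and the step I would flag as the crux, is the opening observation that the feasible-price interval for $M_2$ is determined entirely by the participating-order support $B^*,A^*$, which the quantity-profile equality forces to coincide with that of $M_1$; everything after that is bookkeeping.
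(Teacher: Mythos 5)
Your argument is correct and takes essentially the same route as the paper: the quantity-profile equality forces the sets of participating bids and asks in $M_2$ to coincide with those of $M_1$, so the uniform price $p^*$ of $M_1$ satisfies $\pr(a)\le p^*\le \pr(b)$ for every transacting pair of $M_2$ and can be assigned to all of its transactions. The only thing to add: where the paper applies this proposition it also needs (and its proof briefly records) that $\vol(M_2)=\vol(M_1)$ and that $M_2$ inherits fairness from $M_1$, both of which are immediate from the same hypothesis $\Q(w,M_1)=\Q(w,M_2)$ for all $w$, so one extra sentence would make your writeup serve the same purpose.
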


The proposition is obvious: the volumes of $M_1$ and $M_2$ must be the same from the condition above. Also, since the same orders participate in both $M_1$ and $M_2$, transactions in $M_2$ can be assigned the same uniform price that is in $M_1$. Finally, fairness also follows immediately since the more competitive orders are fully traded in $M_1$, they must be fully traded in $M_2$ as $\Q(w,M_1)=\Q(w,M_2)$ for all orders $w$.

We now turn to proving Theorem~\ref{thm:UMiscorrect}.
\begin{proof}[Proof of Theorem~\ref{thm:UMiscorrect}]
For a list of orders $\Omega$, we use $\Omega{\downarrow}$ to denote the list obtained by sorting $\Omega$ by decreasing competitiveness.  

We make the following claim.
\begin{claim}
     Let $B$ be a list of bids, $A$ be a list of asks, and $M$ be a matching such that $\vol(B)=\vol(A)$. Let $M_1 = \match(B{\downarrow }, A{\downarrow}, M)$, $M_2= \UMb(B,A, M)$, and $M_3=\UMa(B,A,M)$. Then, for all $\omega\in B\cup A$, $\Q(\omega, M_1) = \Q(\omega, M_2) = \Q(\omega, M_3)$.
\end{claim}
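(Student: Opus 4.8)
The plan is to prove the Claim by strong induction on $n=|B|+|A|$, establishing simultaneously that $\UMb$ and $\UMa$ agree with $\match$ on the sorted lists; that is, for all $\omega$ both $\Q(\omega,M_2)=\Q(\omega,M_1)$ and $\Q(\omega,M_3)=\Q(\omega,M_1)$ hold, from which $\Q(\omega,M_2)=\Q(\omega,M_3)$ follows by transitivity. Since $\UMa$ is the bid--ask mirror of $\UMb$, and $\match$ yields the same matching under interchanging its two sorted list arguments (tradability is symmetric, and when the tops are not tradable the deletion of asks produces no further transactions, so the outcome coincides with halting), the statement $\Q(\omega,M_3)=\Q(\omega,M_1)$ is the mirror image of $\Q(\omega,M_2)=\Q(\omega,M_1)$; I would prove the latter in detail and invoke this symmetry, together with the symmetric form of \cref{lem:2} noted after it, for the former. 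Two observations streamline the induction. First, $\match$, $\UMb$, and $\UMa$ only append transactions to their matching argument, and the appended transactions depend only on $(B,A)$; hence each $\Q(\omega,M_i)$ equals $\Q(\omega,M)$ plus a term determined by $(B,A)$, so the common offset $M$ can be ignored and it suffices to compare freshly produced transactions. Second, the hypothesis $\vol(B)=\vol(A)$ is preserved by every recursive call: in the tradable branch $\vol(B_\prec)=\vol(B)-q=\vol(A)-q=\vol(A_\preceq)$, and in the non-tradable branch $\vol(B_\succeq)=q=\vol(A_\succeq)$.

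For the base cases ($|B|=0$, $|A|=0$, or $B=\{b\},A=\{a\}$ with $\pr(a)>\pr(b)$) both $\UMb$ and $\match$ return $M$ unchanged. In the inductive step $\UMb$ selects the median bid $b$, forms $\spl(B,b)=(B_\succeq,B_\prec)$, sets $q=\vol(B_\succeq)$, and computes $\splq(A,q)=(a,A_\succeq,A_\preceq)$ with $\vol(A_\succeq)=q$. By construction $b$ is the least competitive bid in $B_\succeq$ and $a$ is the least competitive ask in $A_\succeq$, so everything hinges on whether this boundary pair is tradable.

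If $\pr(a)\le\pr(b)$, then since every bid of $B_\succeq$ is at least as competitive (higher-or-equal price) as $b$ and every ask of $A_\succeq$ is at least as competitive (lower-or-equal price) as $a$, every bid of $B_\succeq$ is tradable with every ask of $A_\succeq$; as $\vol(B_\succeq)=\vol(A_\succeq)=q$, the top-down matching fully trades $B_\succeq$ against $A_\succeq$, so in particular $b$ is completely traded in $\UM(B,A)$. This lets me apply \cref{lem:2} with $\hat b=b$: the transactions freshly produced by $\match(B{\downarrow},A{\downarrow},M)$ decompose into those of $\UM(B_\succeq,A_\succeq)$ together with those of $\UM(B_\prec,A_\preceq)$. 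This is exactly what $\UMb$ produces, namely $\match(B_\succeq,A_\succeq,\emptyset)$ fully trades $B_\succeq$ with $A_\succeq$ (each order's traded quantity equals its full quantity, independently of internal order, since all pairs are tradable and the volumes are equal), reproducing the $\UM(B_\succeq,A_\succeq)$ quantities, while the recursive call $\UMa(B_\prec,A_\preceq,\cdot)$ agrees with $\match(B_\prec{\downarrow},A_\preceq{\downarrow},\cdot)$ by the induction hypothesis (the instance is strictly smaller and satisfies $\vol(B_\prec)=\vol(A_\preceq)$), reproducing the $\UM(B_\prec,A_\preceq)$ quantities. Summing gives $\Q(\omega,M_2)=\Q(\omega,M_1)$ for all $\omega$.

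If $\pr(a)>\pr(b)$, I claim $\match(B{\downarrow},A{\downarrow},M)$ never matches any order of $B_\prec$ or $A_\preceq$, so it produces exactly the transactions of $\match(B_\succeq{\downarrow},A_\succeq{\downarrow},M)$. The key is that the greedy consumes bids and asks in lockstep---each transaction decreases the remaining bid and ask volumes by the same amount---so to reach an order of $B_\prec$ (respectively $A_\preceq$) it would first have to consume all $q$ units of $B_\succeq$ (respectively $A_\succeq$); but the last units of $B_\succeq$ and $A_\succeq$ belong to the boundary orders $b$ and $a$, which are not tradable, so the greedy halts (deleting the remaining, uniformly higher-priced asks without producing further transactions) before consuming all $q$ units. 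Hence $B_\succeq{\downarrow}$ and $A_\succeq{\downarrow}$, being prefixes of $B{\downarrow}$ and $A{\downarrow}$, already determine the output, orders of $B_\prec,A_\preceq$ retain their $M$-quantities on both sides, and the recursive $\UMa(B_\succeq,A_\succeq,M)$ agrees with $\match(B_\succeq{\downarrow},A_\succeq{\downarrow},M)$ by the induction hypothesis. For well-foundedness, every recursive call strictly decreases $n$ except the degenerate step where $|B|=1$ (symmetrically $|A|=1$), in which $\UMb$ delegates to $\UMa$ on the same arguments; this introduces no cycle, since the delegate ($\UMa$ with $|A|\ge 2$) then strictly decreases $|A|$ while the case $|A|=1$ is absorbed by the singleton base case, so the induction on $n$ with this single delegation absorbed is well-founded. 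I expect the main obstacle to be precisely this analysis of the greedy $\match$ on prefixes---rigorously justifying the lockstep/halting argument in the non-tradable case and the clean invocation of \cref{lem:2} in the tradable case---rather than the surrounding bookkeeping for $M$ and the mutual recursion.
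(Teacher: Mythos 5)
Your proof is correct and follows essentially the same route as the paper's: induction on $|B|+|A|$, a case split on the tradability of the boundary pair $(b,a)$, \cref{lem:2} in the tradable case, and a symmetry argument for $\UMa$. The only cosmetic difference is that in the non-tradable case you argue the lockstep/halting behaviour of $\match$ directly where the paper invokes the range condition of \cref{lem:1} (the content is the same), and you are in fact slightly more explicit than the paper about verifying the hypothesis of \cref{lem:2} and the well-foundedness of the mutual recursion at the $|B|=1$ delegation.
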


If the claim is true, then Theorem~\ref{thm:UMiscorrect} follows immediately by observing that without loss of generality we had assumed that our inputs $B$ and $A$ are such that $\vol(B)=\vol(A)$ (which was achieved by adding a dummy untradable order), $\UM(B,A)=\match(B{\downarrow},A{\downarrow},\emptyset)$, and $\UMs(B,A)=\UMb(B,A,\emptyset)$.

We now prove the claim by induction on $|B|+|A|$.

We focus on showing the following part: for all $\omega\in B\cup A$, $\Q(\omega,M_1)=\Q(\omega,M_2)$. A symmetric argument will yield $\Q(\omega,M_1)=\Q(\omega,M_3)$. 

The base cases include $|B|=|A|=0$ and $|B|=|A|=1$. The proof in these cases follows easily.

Thus, we are left with cases where $|B|\geq 1$, $|A|\geq 1$, and $|B|+|A|\geq 3$. Now, we argue that it suffices to consider cases where $|B|\geq 2$. Let us analyze what happens when we run $\UMb$ on $(B,A,M)$, where $|B|=1$ and $|A|\geq 2$. $\UMb$ will compute $B_\succeq = B$ and $A_\succeq = A$ (as $B$ is a singleton and $\vol(B)=\vol(A)$). $\UMb$ checks whether the `median' bid $b$ (the only bid in $B$) and the ask $a$ (the least competitive ask in $A$) are tradable or not. If they are tradable, then each order in $\omega\in B\cup A$ will be exhaustively matched. Since it is a uniform price matching, every bid-ask pair is tradable, so $\match$ on $(B{\downarrow},A{\downarrow},M)$ will also match every order in $B\cup A$ exhaustively, and the claim follows easily. If they are not tradable, then $\UMb$ will return $\UMa(B,A,M)$, i.e., $\UMb(B,A,M)=\UMa(B,A,M)$, and in this case $|A|\geq 2$ and will be handled when we apply the symmetric argument to prove $\Q(\omega,M_1)=\Q(\omega,M_3)$.

Thus, we may assume that $|B|\geq 2$ and this will imply that both $B_\succeq$ and $B_\prec$ will turn out to be proper subsets of $B$ when we run $\UMb$ on $(B,A,M)$.

We fix sets $B$, $A$, and $M$ such that $|B|\geq 2$. Also, $M_1 = \match(B{\downarrow }, A{\downarrow}, M)$ and $M_2= \UMb(B,A, M)$. We need to show that for all $\omega\in B\cup A$, $\Q(\omega,M_1)=\Q(\omega,M_2)$.

In the proof below, we will be using the following facts.

\begin{itemize}
    \item $\UM(B',A') = \match(B'{\downarrow},A'{\downarrow},\emptyset);$
    \item $\match(B',A', M') = M' \cup \match(B',A',\emptyset);$
    \item $\UMb(B',A',M')= M'\cup\match(B',A',\emptyset);$
    \item $\UMa(B',A',M')= M'\cup\UMa(B',A',\emptyset)$, where the union is a disjoint union (which is a list concatenation operation when thinking of the matchings as lists).
\end{itemize}

 $\UMb$ on $(B,A,M)$ first finds the `medians' bid $b$ and ask $a$, and the partitions  $(B_\succeq, B_\prec)$ of $B$, $(A_\succeq, A_\preceq)$ of $A$. Let $q=\vol(B_\succeq)=\vol(A_\succeq)$. $\UMb$ then checks whether $b$ and $a$ are tradable which gives rise to two cases.

 \textbf{Case: $b$ and $a$ are tradable}: In this case, $B_\succeq$ is matched completely with $A_\succeq$ to produce a matching $M'$ with quantity $q$ and the final output matching is $$M_2 = \UMa(B_\prec,A_\preceq,M\cup M')=M\cup M' \cup \UMa(B_\prec,A_\preceq, \emptyset).$$

 Also, $\match(B{\downarrow},A{\downarrow},M)=M\cup\match(B{\downarrow},A{\downarrow},\emptyset)=M\cup\UM(B,A)$.  
We now invoke Lemma~\ref{lem:2} by setting  $\hat b$  to $b$ to argue that the matching output by $\UM$ on $(B,A)$ is\\  $\UM(B_ \succeq, A_\succeq)\cup \UM(B_\prec, A_\preceq)$. Thus, $$M_1 = M\cup \UM(B_ \succeq, A_\succeq)\cup \UM(B_\prec, A_\preceq)= M\cup \UM(B_ \succeq, A_\succeq)\cup \match(B_\prec{\downarrow}, A_\preceq{\downarrow},\emptyset). $$

Note that we have expressed both $M_1$ and $M_2$ as a disjoint union (list concatenation) of three sets. Fix an $\omega\in A\cup B$. Now, $\Q(\omega, M)=\Q(\omega, M)$ (trivially), $\Q(\omega, M')=\Q(\omega, \UM(B_\succeq, A_\succeq))$ (as $M'$ is obtained by exhaustively matching all orders in $(B_ \succeq, A_\succeq)$), and $\Q(\omega, \UMa(B_\prec,A_\preceq, \emptyset)) = \Q(\omega, \match(B_\prec{\downarrow}, A_\preceq{\downarrow},\emptyset))$ (from induction). Thus, we have $\Q(\omega,M_1)=\Q(\omega,M_2)$.

\textbf{Case: $b$ and $a$ are not tradable} $\UMb$ completely discards $B_\prec$ and $A_\preceq$ and outputs the matching $$M_2=\UMa(B_\succeq, A_\succeq, M)=M\cup\UMa(B_\succeq,A_\succeq, \emptyset).$$

Also,
$M_1=\match(B{\downarrow},A{\downarrow},M)= M\cup\match(B{\downarrow},A{\downarrow},\emptyset).$ Observe that no bid in $B_\prec$ is tradable with any ask in $A_\preceq$, as bids in $B_\prec$ are strictly less competitive than $b$ and asks in $A_\preceq$ are at most as competitive as $a$, and $b$ and $a$ are not tradable.
We further claim that $\match(B{\downarrow},A{\downarrow},\emptyset)=\UM(B,A)$ does not match any orders from $B_\prec\cup A_\preceq$.
To see this, we invoke Lemma~\ref{lem:1}. Note that except for bid $b$ and ask $a$, the respective ranges of orders in $B_\succeq$ and $A_\succeq$ have numbers strictly less than $q$ and the respective ranges of orders in $B_\prec$ and $A_\preceq$ have numbers that are all strictly greater than $q$, as $q\in \range_A(a) \cap \range_B(b)$. Thus, any potential matches between $B_\prec$ and $A_\succeq$ or between $A_\preceq$ and $B_\succeq$ can only happen between $b$ and $a$, but they are not tradable (as per the case).
Thus, we conclude that the $\UM(B,A)= \UM(B_\succeq, A_\succeq)$. 
 Therefore, $$M_1= M\cup \match(B_\succeq{\downarrow},A_\succeq{\downarrow},\emptyset). $$

From induction, arguing as before, we get for all orders $\omega\in A\cup B$, $\Q(\omega,M_1)=\Q(\omega,M_2)$, and we are done.
\end{proof}

We now analyze the running time of $\UMs$. Let $T(n)$ represent the running time of $\UMs$, where the number of orders $|B|+|A|=n$.

$\UMs$ calls $\UMb$, which in turn calls $\UMa$ after decreasing the number of the bids by a factor of two. $\UMa$ then calls $\UMb$ again after decreasing the number of the asks by a factor of two. So after two successive returns, we can see that the number of bids and asks decreases by a factor of two. Also, since all the subroutines take linear time, by simple inspection, we conclude

$T(n) \leq T(\lceil \frac n 2 \rceil + 1) + cn$, where $c$ is an absolute constant. Thus, clearly $T(n) = O(n)$.

This completes the proof of Theorem~\ref{thm:result3}.

\section{Conclusions}
\label{sec:conclusions}
The problems we consider are clearly of fundamental interest and we achieve asymptotically tight results for them using elementary techniques. 
Surprisingly, despite their fundamental nature and wide practical applicability, prior to this work, the complexity aspects of such problems were not deeply studied. The following natural questions arise from our work.
\begin{itemize}
    \item 
In this work, the most classical exchange model is assumed; there are execution principles other than price-time priority (like pro-rata matching) which are also being employed in the real world.
These alternative principles present opportunities for studying algorithmic complexity beyond the traditional price-time priority model.
\item Furthermore, it might be interesting to consider similar problems in the context of decentralized exchanges. 
\item Finally, bridging the gap between the upper and lower bounds on the time complexity of Problem 1 in the binary query model remains open.
\end{itemize}

\section*{Acknowledgement}
The authors express their gratitude to Jaikumar Radhakrishnan for bringing the element distinctness problem to their attention and for engaging in insightful discussions that enriched their understanding of it. They are also thankful to him for helping improve the presentation of this work.

\bibliography{main}

\end{document}